\newtheorem{lemma}{Lemma}
\newtheorem{theorem}{Theorem}
\newtheorem{assumption}{Assumption}
\newtheorem{proposition}{Proposition}
\newtheorem{definition}{Definition}
\newtheorem{remark}{Remark}
\let\MYcaption\@makecaption
\let\@makecaption\MYcaption
\begin{document}
\bstctlcite{IEEEexample:BSTcontrol}
\title{Semi-Blind Post-Equalizer SINR Estimation and Dual CSI Feedback for Radar-Cellular Coexistence}
\author{Raghunandan M. Rao,~\IEEEmembership{Student Member,~IEEE,} Vuk Marojevic,~\IEEEmembership{Senior Member,~IEEE,} Jeffrey H. Reed,~\IEEEmembership{Fellow,~IEEE}
\thanks{Raghunandan M. Rao and Jeffrey H. Reed are with Wireless@VT, the Bradley Department of ECE, Virginia Tech, Blacksburg, VA, 24061, USA (email: \{raghumr, reedjh\}@vt.edu).}
\thanks{Vuk Marojevic is with the Department of ECE, Mississippi State University, Mississippi State, MS, 39762, USA
(e-mail: vuk.marojevic@ece.msstate.edu).}
\thanks{The support of the U.S. National Science Foundation (NSF) Grants CNS-1564148 and CNS-1642873 are gratefully acknowledged.} }

\maketitle
\vspace{-40pt}
\begin{abstract}
Current cellular systems use pilot-aided statistical-channel state information (S-CSI) estimation and limited feedback schemes to aid in link adaptation and scheduling decisions. However, in the presence of pulsed radar signals, pilot-aided S-CSI is inaccurate since interference statistics on pilot and non-pilot resources can be different. Moreover, the channel will be \textit{bimodal} as a result of the periodic interference. In this paper, we propose a max-min heuristic to estimate the post-equalizer SINR in the case of non-pilot pulsed radar interference, and characterize its distribution as a function of noise variance and interference power. We observe that the proposed heuristic incurs low computational complexity, and is robust beyond a certain SINR threshold for different modulation schemes, especially for QPSK. This enables us to develop a comprehensive \textit{semi-blind} framework to estimate the wideband SINR metric that is commonly used for S-CSI quantization in 3GPP Long-Term Evolution (LTE) and New Radio (NR) networks. Finally, we propose \textit{dual CSI feedback} for practical radar-cellular spectrum sharing, to enable accurate CSI acquisition in the \textit{bimodal channel}. We demonstrate significant improvements in throughput, block error rate and retransmission-induced latency for LTE-Advanced Pro when compared to conventional pilot-aided S-CSI estimation and limited feedback schemes.
\end{abstract}

\begin{IEEEkeywords}
Post-equalizer SINR, Semi-blind Techniques, Max-min Heuristic, Dual CSI Feedback, Radar-Cellular Coexistence.
\end{IEEEkeywords}

\IEEEpeerreviewmaketitle
\vspace{-5pt}
\section{Introduction}
\IEEEPARstart{I}{n} order to mitigate severe spectrum shortage in sub-6 GHz bands and meet the exponentially increasing demand for user data, spectrum sharing has been proposed. In sub 6-GHz frequency bands, radar systems are the major primary consumers of spectrum, where most commercial cellular and wireless LAN (WLAN) systems currently operate. Spectrum sharing with radars is efficient because of its waveform characteristics, and sparse deployment. In particular, spectrum sharing with \textit{pulsed radar systems} is more desirable because of the interference-free time duration that can be leveraged for secondary user operations. 

In the United States, the Federal Communications Commission (FCC) has ratified the rules for radar-communications coexistence in the 3550-3650 MHz \cite{FCC_3point5_GHz_Rules} and 5 GHz \cite{FCC_5_GHz_FirstOrder} bands. More recently, the radar-incumbent 1.3 GHz \cite{WhiteHouse_1point3_GHz_Consideration} and 3450-3550 MHz \cite{NTIA_3450_to_3550} bands have also been identified for spectrum sharing. Due to these ongoing developments, cellular standardization has evolved into support for operation in unlicensed frequency bands, such as License Assisted Access (LAA) \cite{LAA_Kwon_ComMag_2017} and the Third Generation Partnership Project (3GPP) 5G New Radio-Unlicensed (5G NR-U) standards. 

In addition, vehicular communications are supported by cellular radio access technologies (RAT). More recently, 3GPP Release 14 introduced the cellular vehicle-to-everything (C-V2X) protocol \cite{3GPPLTE_TS36213_v14}, which can operate either in the 5.9 GHz band, or the cellular operator's licensed band \cite{Gaurang_Biplav_CV2X_IEEE_Acc_2019}. Therefore, C-V2X systems would also have to share spectrum with other wireless systems such as Wi-Fi and radar. In particular, high-powered radars operating in the 5 GHz Unlicensed National Information Infrastructure B (U-NII B) bands \cite{FCC_5_GHz_FirstOrder} can cause adjacent channel interference to C-V2X systems.

Often, cooperation between radar and cellular systems is (a) impractical in the case of outdated civilian radar systems, and (b) impossible with military radars due to security concerns. In addition, due to the rapid progress of cellular technology compared to that of radar systems, the burden of harmonious coexistence is usually placed on cellular systems, which is the premise for this paper.
\vspace{-9pt}
\subsection{Related Work}
Prior works have proposed harmonious radar-cellular coexistence mechanisms in different operational regimes using multi-antenana techniques, waveform optimization, and opportunistic spectrum access. Multi-antenna techniques exploit the spatial degrees of freedom to minimize mutual interference, and methods such as subspace projection \cite{Khawar_DySPAN_Spec_Share_2014},  \cite{Mahal_MIMOLTE_MIMORadar_Coexist_TAES_2017}, robust beamforming \cite{Liu_Robust_MIMO_BF_Rad_Cell_Coexist_2017}, and MIMO matrix completion \cite{Li_MIMOMC_Radar_TSP_2016} have been investigated in the past. These works assume the availability of accurate channel state information at the radar and/or the cellular system, which is often infeasible, especially in the case of spectrum sharing with military radars.

Radar waveform optimization approaches using mutual information (MI)-based metrics have been investigated in \cite{Tang_Li_Spectr_constr_Rad_Wfm_TSP_2019} to mitigate interference to secondary users. In addition, new multicarrier waveforms such as Precoded SUbcarrier Nulled-Orthogonal Frequency Division Multiplexing  (PSUN-OFDM) \cite{kim2016psun}, and FREquency SHift (FRESH)-filtered OFDM \cite{Carrick_Reed_FRESH_TAES_2019} have been proposed to improve their resilience to pulsed interference. Unfortunately, these waveforms require significant changes to existing radar systems and cellular standards, which makes their implementation infeasible in the near future. 

Opportunistic spectrum sharing approaches have also been studied in the context of spectrum sharing with a \textit{rotating radar} \cite{Rot_radar_cellular_JSAC_2012}, \cite{Khan_DaSilva_WCM_2016}, that leverages partial or complete information about the radar behavior to maximize spectral utilization in time/frequency/spatial dimensions. However these are not easily applicable to systems such as search-and-track radars. 

Numerical and experimental studies of underlay radar-LTE spectrum sharing scenarios \cite{ghorbanzadeh2016radar}, \cite{Reed_TDLTE_Rad_2point5_WCL_2016} have demonstrated that practical LTE deployments can operate with negligible degradation with an exclusion zone radius of tens of kilometers, which is significantly smaller than what is used in current deployments. However in these regimes, the pulsed radar intermittently impairs the cellular signal, disrupting data resources and critical control mechanisms of the cellular system. 

\begin{table}[t]
	\renewcommand{\arraystretch}{1.1}
	\caption{Simulation Parameters: Underlay Spectrum Sharing between an LFM Pulsed Radar and LTE-A Pro Downlink}
	\label{Table_Radar_LTE_coexist_HybSINR_DCFB}
	\centering
	\begin{tabular}{|l| l|}
		\hline
		\textbf{Parameter} & \textbf{Description}\\
		\hline
		3GPP Releases & 8 to 14 (LTE to LTE-A Pro)\\
		\hline 
		Center Frequency & $2$ GHz \\
		\hline
		System Bandwidth & $10 \text{ MHz}$ \\
		\hline
		Transmission Mode & TM 0 (SISO) from Port 0 \cite{sesia2011lte} \\
		\hline
		Small-scale Fading & Extended Pedestrian A (EPA)\\
		& Doppler frequency $f_d = 10$ Hz\\
		\hline 
		CSI feedback mode & Periodic and Wideband \\
		\hline
		CSI estimation interval \cite{sesia2011lte} & $10$ ms\\
		\hline
		CSI delay & $8$ ms \\
		\hline
		HARQ mode & Asynchronous and Non-Adaptive\\
		& with up to 4 retransmissions \\
		\hline
		Radar Pulse repetition Interval & $3.125$ ms\\
		\hline
		Radar pulse width $(T_\mathtt{pul})$ & $5\ \mu s$ \\
		\hline
		Radar relative carrier & $0$ Hz \\
		frequency offset $(\Delta f_r)$ & \\
		\hline
		Radar sweep frequency $(f_s)$ & $5$ MHz \\
		\hline
	\end{tabular} 
\end{table}

	\begin{figure*}[t]
	\centering
	\begin{subfigure}[t]{0.48\textwidth}
		\raggedleft
		\includegraphics[width=3.2in]{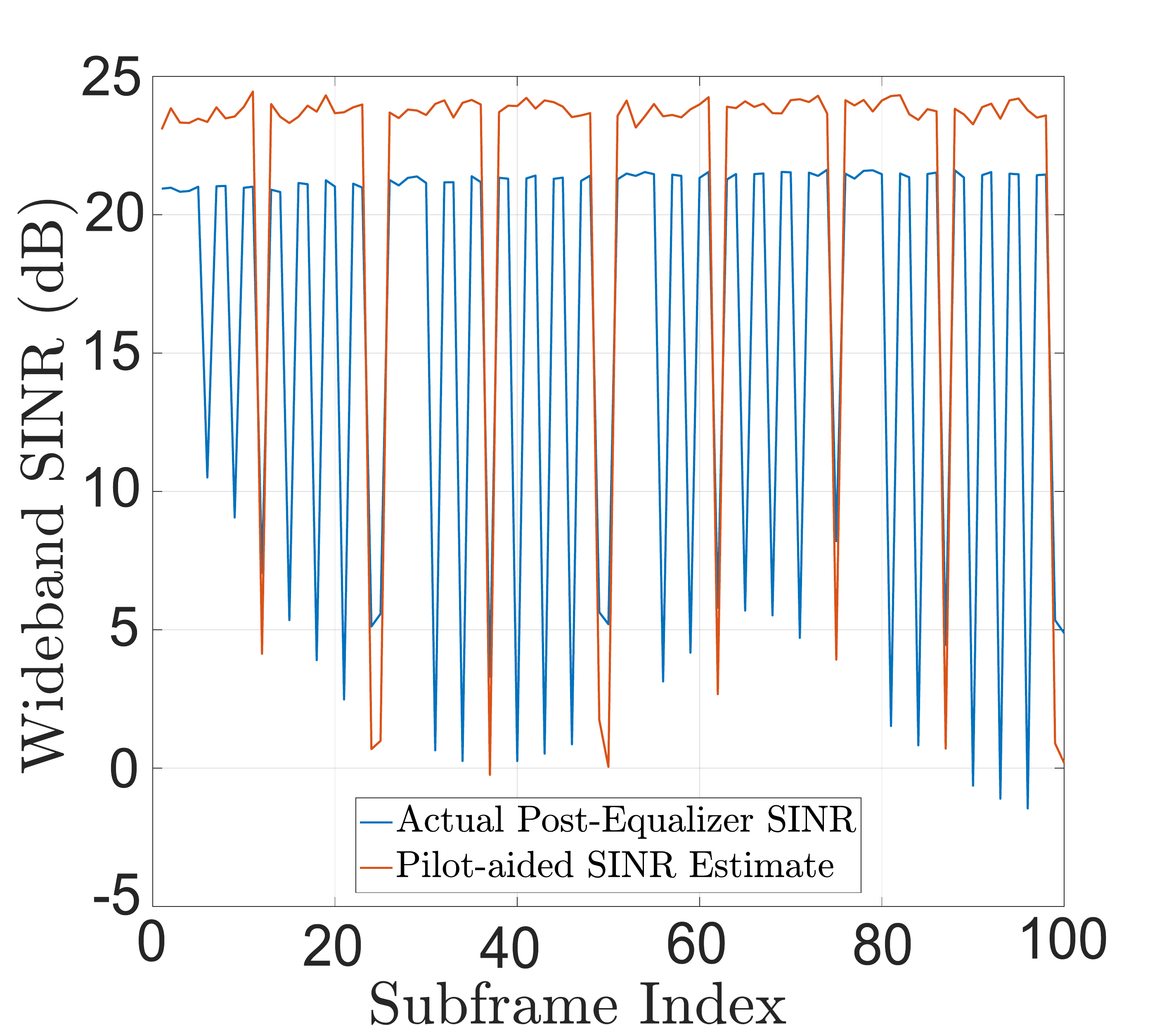}
		\caption{}
		\label{Fig_TVT_Act_vs_pilaid_SINR_frep_320}
	\end{subfigure}
	~
	\begin{subfigure}[t]{0.48\textwidth}
		\centering
		\includegraphics[width=3.4in]{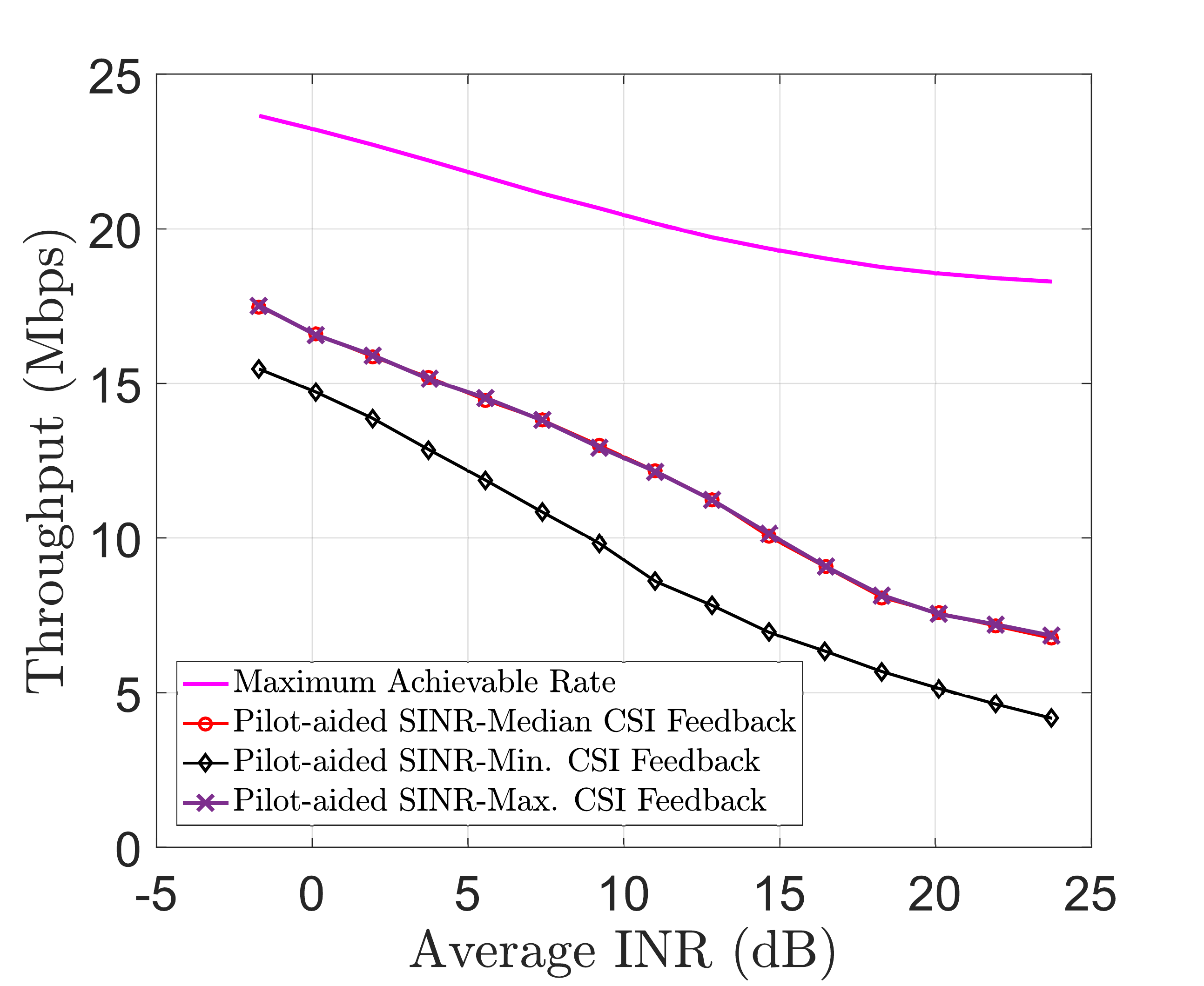}
		\caption{}
		\label{Fig_TVT_No_PropScheme_thpt_plots}
	\end{subfigure}
	~
	\begin{subfigure}[t]{0.48\textwidth}
		\centering
		\includegraphics[width=3.2in]{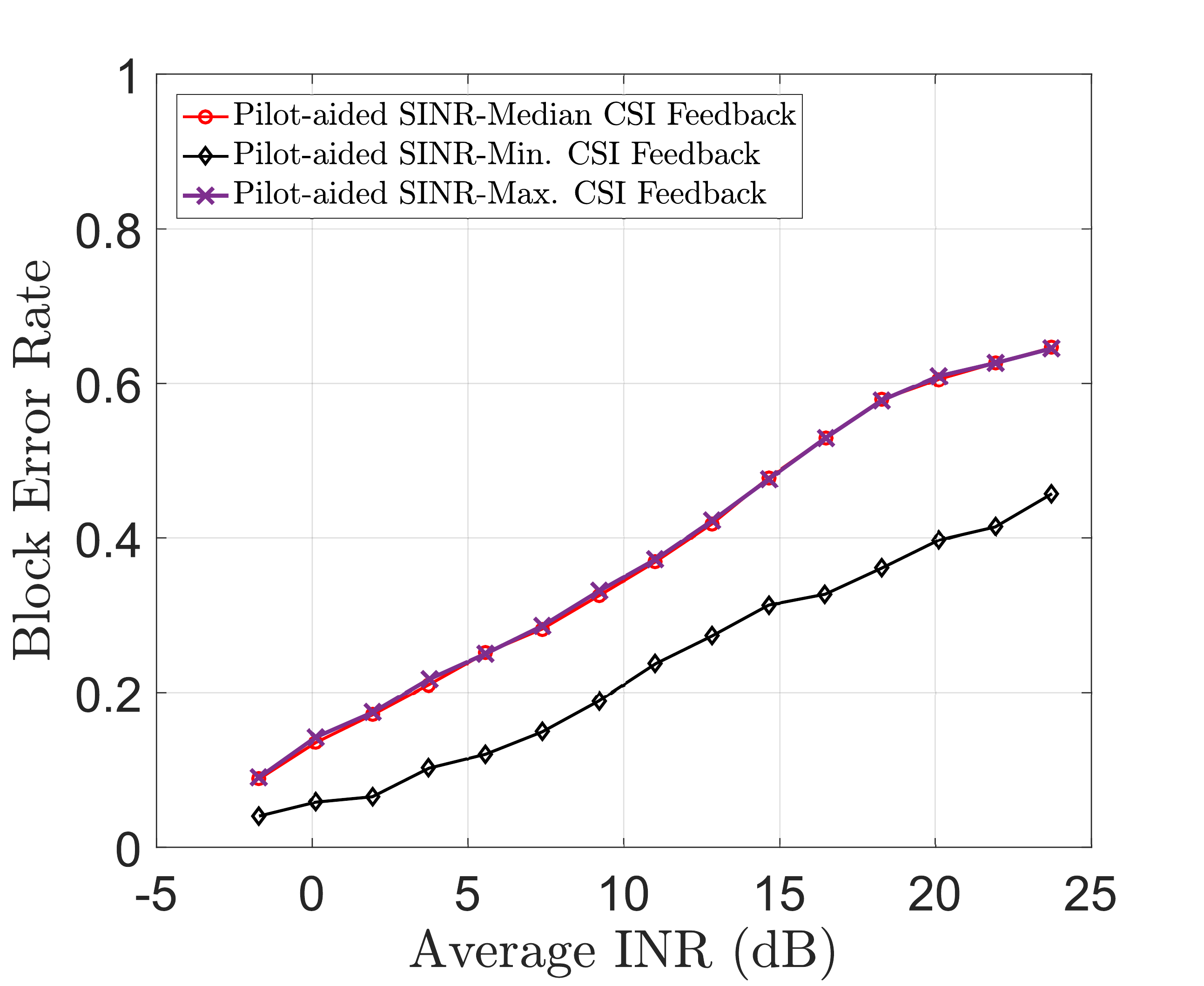}
		\caption{}
		\label{Fig_TVT_No_PropScheme_BLER_plots}
	\end{subfigure}
	\caption{Illustration of (a) inaccurate pilot-aided SINR estimates due to pulsed 	radar interference, and degradation of (b) throughput, and (c) block error rate (BLER) performance. The average SNR of the eNB-to-UE fading channel is $19.5$ dB.}
	\label{Fig_Illustrate_Problem}
\end{figure*}
\vspace{-10pt}
\subsection{Motivation}\label{Motivation_subsec}
Statistical-CSI (S-CSI) such as channel spatial covariance and \textit{post-equalizer SINR} \cite{rupp2016vienna} are important quantities, forming the basis of link adaptation and user scheduling schemes in modern wireless communication systems. LTE and NR systems use \textit{pilot-aided} S-CSI estimation schemes, and limited feedback mechanisms to balance link-level performance with feedback overhead \cite{sesia2011lte}, \cite{dahlman20185g}. Since pilots occupy a tiny fraction (at most 5\%) of time-frequency resources in cellular signals, pilot-aided S-CSI is accurate if interference and fading statistics are the same on pilot and non-pilot resources. While this is generally true in conventional cellular deployments, it does not hold in the presence of pulsed interference. 

We illustrate this using the example of a linear frequency modulated (LFM) pulsed radar coexisting with the LTE-A Pro downlink between a single evolved NodeB (eNB) and a single user equipment (UE). The system parameters shown in Table \ref{Table_Radar_LTE_coexist_HybSINR_DCFB}. The baseband transmitted waveform of the LFM radar is given by \cite{charles1993radar} 
	\begin{align}
	\label{Radar_waveforms}
	i_\mathtt{LFM}(t) & = \sqrt{P_\mathtt{rad}} e^{j \big( \tfrac{\pi f_s t}{T_\mathtt{pul}} + 2 \pi \Delta f_r \big) t} \text{ for } -\tfrac{T_\mathtt{pul}}{2}\leq t \leq \tfrac{T_\mathtt{pul}}{2},
	\end{align}
\noindent where $P_\mathtt{rad}$ is the radar transmitted power, $T_\mathtt{pul}$ is the radar pulse width, $f_s$ the sweep frequency, and $\Delta f_r$ the offset w.r.t. the center frequency of the cellular signal. 

Fig. \ref{Fig_TVT_Act_vs_pilaid_SINR_frep_320} illustrates the fundamental issue: Since pulsed radar interference is time-selective, absence of pilot interference results in inaccurate pilot-aided SINR estimates. As shown in Fig. \ref{Fig_TVT_No_PropScheme_thpt_plots} and \ref{Fig_TVT_No_PropScheme_BLER_plots}, this leads to degradation of throughput and block error rate performance in the case of commonly used limited feedback schemes: minimum, median and maximum CSI feedback (explained in Section \ref{Sec:Post-Eq-SINR}). Note that at the eNB, the criteria for choosing the modulation and coding scheme (MCS) is to maximize rate while satisfying $\text{BLER} \leq 0.1$ \cite{3GPPLTE_TS36213_v14}. The maximum achievable rate (pink curve) in Fig. \ref{Fig_TVT_No_PropScheme_thpt_plots} is the maximum rate achieved (using the maximum MCS) under the constraint that average $\text{BLER} = 0.1$.

Similar observations have been demonstrated in \cite{Rao_VTC_LTE_LinkAdapt_2019}, where \textit{non-pilot interference} was shown to significantly affect SINR estimates, resulting in the  degradation of link adaptation performance. In addition, results from \cite{Safavi_Roy_ICC_SINR_Nrwbnd_rad_2015} have shown lower link-level performance because of inaccurate CSI estimates in pulsed radar-LTE coexistence. Our prior work  \cite{Rao_Vuk_Reed_COMML_2020} rigorously proved that the S-CSI acquired for the \textit{interference channel} is inaccurate for a wide range of radar repetition intervals.

In such scenarios, blind SINR estimation methods need to be used since they do not rely on pilot signals. Prior works have investigated maximum likelihood (ML) \cite{Baumgartner_ML_2014}, \cite{Socheleau_ML_2008}, moment-based \cite{Chen_NDA_SINR_Estim_QAM_CL_2005}, and cyclostationary-based \cite{Hong_CycloSINR_2012} SINR estimation methods. However, the accuracy of ML and moment-based methods depend on the availability of accurate fading and interference statistics of the channel, which is often infeasible to acquire in real-time. For cyclostationarity-aided methods, short length of the cyclic prefix, unequal power allocation across subcarriers, and dependence of its accuracy on long-term averaging (for thousands of OFDM symbols) hinder their application to practical scenarios. Moreover, these methods do not estimate the \textit{post-equalizer} SINR\footnote{Post-equalizer SINR refers to the SINR of the received signal after channel estimation and equalization stages of the baseband receiver \cite{rupp2016vienna}, \cite{Kalyani_RatePredLTE_TWC_2016}.}, which is the metric used to aid scheduling decisions and link adaptation procedures in LTE and NR \cite{rupp2016vienna}. 

In addition to inaccurate SINR estimates in Fig. \ref{Fig_TVT_Act_vs_pilaid_SINR_frep_320}, we observe that the channel is \textit{bimodal}, due to periodic transitions between \textit{`interference-free'} and \textit{`interference-impaired'} states. Since limited feedback procedures in LTE and NR support single CSI feedback for a given frequency subband, it is fundamentally impossible to quantize the bimodal nature of the channel using a single value.
\vspace{-10pt}
\subsection{Contributions}
In this paper, we make the following contributions:
\begin{enumerate}
\item We present a robust max-min heuristic to estimate the post-equalizer SINR with low complexity, and characterize its distribution under a realistic tractable signal model for quadrature amplitude modulated (QAM) symbols. We analyze its accuracy and robustness, to demonstrate its applicability for radar-impaired OFDM symbols in practical spectrum sharing scenarios (section \ref{Low_Complex_Heur_SINR_Estimate}).

\item We propose a comprehensive framework to estimate the radar parameters, and combine pilot-aided as well as heuristic-aided SINR estimates to calculate the wideband post-equalizer SINR metric (section \ref{Hyb_SINR_Estim_Framwrk}).

\item We propose \textit{`dual CSI feedback'} as a simple extension to currently used limited CSI feedback mechanisms in cellular systems, to support CSI acquisition for \textit{`fading'} and \textit{`interference-impaired'} channel states (section \ref{sec:Dual-CSI-FB}). 
 
\item Using radar-LTE-A Pro spectrum sharing as an example, we demonstrate significant improvements in rate, BLER and retransmission-induced latency using our proposed framework, when compared to conventional \textit{pilot-aided SINR} and \textit{single CSI feedback} schemes (section \ref{sec:Dual-CSI-FB}).
\end{enumerate}
The rest of this paper is organized as follows. Section \ref{Sec:Post-Eq-SINR} provides the system model, and describes the basics of CSI estimation and limited feedback schemes used in LTE and NR. Section \ref{Low_Complex_Heur_SINR_Estimate} describes the max-min heuristic, derives its distribution under a tractable signal model, and analyzes its accuracy and robustness. Section \ref{Hyb_SINR_Estim_Framwrk} introduces the semi-blind SINR estimation framework, and evaluates its performance. Section \ref{sec:Dual-CSI-FB} develops the dual CSI feedback mechanism, discusses the incurred overhead, and demonstrates its effectiveness through link-level simulation results for radar-LTE-A Pro coexistence scenarios. Finally, section \ref{sec:Conclusions} concludes the paper, and discusses directions for future research.

\section{System Model and Preliminaries}\label{Sec:Post-Eq-SINR}
\subsection{Cellular Downlink Signal Model}
We consider an underlay radar-cellular spectrum sharing scenario, where the cellular downlink coexists with a \textit{wideband pulsed radar system}. For ease of exposition, we consider a single base station (with $N$ antenna ports) serving a single user (with $K$ antenna ports). The cellular downlink is OFDM-based with $N_\mathtt{sub}$ subcarriers, where data is transmitted in blocks composed of $T$ OFDM symbols. The received signal vector on the $k^{th}$ subcarrier of the $n^{th}$ OFDM symbol (referred to as a resource element (RE)) indexed by an ordered pair $(n,k)$), $\mathbf{z}_k [n] \in \mathbb{C}^{K}$, is given by
\begin{equation}
\label{MIMO_eqn}
\mathbf{z}_k [n]= \mathbf{H}_k [n] \mathbf{W}_k [n] \mathbf{x}_k [n] + \mathbf{h}_{r,k} [n] i_k [n]+ \mathbf{w}_k [n],
\end{equation}
where $\mathbf{H}_k [n] \in \mathbb{C}^{K \times N}$ is the downlink channel matrix, $\mathbf{W}_k [n] \in \mathbb{C}^{N \times L}$ the precoding matrix, and $L$ the data vector length. The transmitted symbol vector is chosen from $\mathbf{x}_k [n] \in \mathcal{X}^{L}$, where $\mathcal{X}$ is the set of symbols for the given modulation scheme. The noise vector is i.i.d. such that $\mathbf{w}_k [n] \sim \mathcal{CN} (0, \sigma_w^2 \mathbf{I}_K)$. After transmit beamforming, the radar-to-user channel vector on the $(n,k)^{th}$ resource elements (RE) is $\mathbf{h}_{r,k} [n] \in \mathbb{C}^{K}$, and the baseband-equivalent interference symbol is $i_k [n]$ such that $\mathbb{E}[\mathbf{h}_{r,k} [n] i_k [n]] = \mathbf{0}$ and $\mathbb{E}[\mathbf{h}_{r,k}[n] i_k[n] i^*_k [n] \mathbf{h}^{H}_{r,k} [n]] = \mathbf{R}_{\mathbf{I},k} [n]$. For ease of notation we suppress the RE index henceforth, while noting that the symbol on each RE is processed in a similar manner. 

If $\mathbf{\hat{H}}$ is the estimated channel matrix and $\hat{\sigma}^2_w$ the estimated noise variance, then the decoded data symbol $\mathbf{\hat{x}}$ using a minimum mean square error (MMSE) equalizer\footnote{In practical systems, other linear equalizers such as Zero-Forcing (ZF) or Regularized ZF are also commonly used to recover the data symbols.} is given by
\begin{align}
\label{Post_Eq_SINR_x_hat}
\mathbf{\hat{x}} & = (\mathbf{W}^H \mathbf{\hat{H}}^H \mathbf{\hat{H}W} + \hat{\sigma}^2_w \mathbf{I}_K)^{-1} \mathbf{W}^H \mathbf{\hat{H}}^H \mathbf{y}.
\end{align}
Defining $\mathbf{\hat{G}}_\mathtt{MMSE} \triangleq (\mathbf{W}^H \mathbf{\hat{H}}^H \mathbf{\hat{H}W} + \hat{\sigma}^2_w \mathbf{I}_L)^{-1} \mathbf{W}^H \mathbf{\hat{H}}^H$, the instantaneous SINR $\hat{\gamma}_l$ for the transmitted symbol on the $l^{th}$ antenna port ($1 \leq l \leq L$) is
\begin{equation}
\label{layer_posteq_SINR}
\hat{\gamma}_l = \frac{|x_l|^2}{\big|[ (\mathbf{\hat{G}}_\mathtt{MMSE} \mathbf{H W} - \mathbf{I}_L) \mathbf{x} + \mathbf{\hat{G}}_\mathtt{MMSE} (\mathbf{h}_r i + \mathbf{w})]_l\big|^2},
\end{equation}
where $[\mathbf{z}]_l$ denotes the $l^{th}$ element of $\mathbf{z}$. Since $\hat{\gamma}_l$ is calculated after baseband processing, it is termed as the \textit{post-equalizer/post-processing SINR}. This is used to calculate link quality metrics \cite{rupp2016vienna}, which subsequently aid in scheduling decisions and link adaptation schemes.
\vspace{-5pt}
\subsection{Pilot-Aided SINR Estimation and Wideband SINR Metrics}\label{Pil-Aided-SINR-Estim-Methd}
Typically, pilot signals are used both for channel estimation as well as for SINR estimation\footnote{3GPP Releases up to LTE-A Pro can use the common reference signal (CRS) and the demodulation reference signal (DMRS) to estimate the channel as well as the SINR. However, pilot signals such as the CSI reference signal (CSI-RS) can only be used to estimate the optimal precoder and SINR.}. In this work, we use the pilot-aided linear MMSE estimation method described in \cite{Mansour_NoiseVar_CQI_2015} assuming unit powered pilot symbols. 
For interference-free pilots in a MIMO transmission mode, the pilot-aided MMSE post-equalizer SINR estimate on the $l^{th}$ antenna port ($\hat{\gamma}_{p,l}$) is given by \cite{Li_Paul_Cioffi_MMSE_SINR_TIT_2006}
\begin{equation}
\label{SINR_pilot}
\hat{\gamma}_{p,l} = \frac{1}{\Big[ \frac{\mathbf{W}^H \mathbf{\hat{H}}^H \mathbf{\hat{H}W}}{\hat{\sigma}^2_w} + \mathbf{I}_L \Big]^{-1}_{l,l}} - 1,
\end{equation}
where $[\mathbf{X}]_{i,i}$ denotes the $i^{th}$ element on the main diagonal of matrix $\mathbf{X}$, and $p$ in the subscript of $\hat{\gamma}_{p,l}$ denotes that it is a pilot-aided SINR estimate. Since a data block comprises of contiguous time and frequency resource elements, a \textit{subband/wideband SINR metric} is often calculated to quantize the CSI. If the SINR estimate on the $(n,k)^{th}$ RE is $\hat{\gamma}[n,k]$, the wideband SINR is obtained using standard mapping functions such as effective exponential SINR mapping ($\hat{\gamma}_\mathtt{e}$) \cite{Donthi_Mehta_EESM_CQIFB_TWC_2011} and average SINR mapping ($\hat{\gamma}_\mathtt{a}$) \cite{Daniels_Heath_Sup_Learning_TVT_2010}, given by
\begin{align}
\label{SINR_Mapping_EESM_avg}
\hat{\gamma}_\mathtt{e} = \log \Big[  \sum_{(n,k) \in \mathcal{D}} \tfrac{e^{-\frac{\hat{\gamma}[n,k]}{\beta}}}{|\mathcal{D}|} \Big]^{-\beta } \text{ and } 
\hat{\gamma}_\mathtt{a} = \sum_{(n,k) \in \mathcal{D}} \tfrac{\hat{\gamma}[n,k]}{|\mathcal{D}|}
\end{align}
respectively, where $\mathcal{D}$ denotes the RE indices of data symbols in the cellular signal, and $\beta$ is a function of the modulation scheme \cite{Donthi_Mehta_EESM_CQIFB_TWC_2011}. 			
\vspace{-10pt}
\subsection{Link Adaptation Using Limited CSI Feedback}\label{Subsec_LinkAdapt_Lim_CSI_FB}
LTE and NR adapt the multi-antenna transmission mode (SISO/diversity/SU-MIMO/MU-MIMO), modulation format, and error control coding scheme, as a function of the channel fading and interference conditions. In order to limit the overhead while balancing performance, they support limited CSI feedback that is generally estimated over a finite estimation window, called the \textit{CSI estimation window}\footnote{The estimation window duration is chosen based on the rate at which the channel statistics vary, depending on user mobility. In typical cellular deployments, this interval ranges from tens to hundreds of milliseconds \cite{dahlman20185g}.}. The quantized CSI value consists of the following quantities:
\begin{enumerate} 
	\item Precoding Matrix Indicator (PMI):  an index of $\mathbf{W}_k \in \mathcal{W}$ chosen from a codebook $\mathcal{W}$ of predefined matrices.
	\item Rank Indicator (RI): the maximum rank supported on the downlink channel, which can be inferred from $\mathbf{W}_k$.
	\item Channel Quality Indicator (CQI): a 4-bit value representing the quantized subband/wideband post-equalizer SINR metric (\ref{SINR_Mapping_EESM_avg}) of the cellular signal.
\end{enumerate}
The CQI is mapped to a 5-bit modulation and coding scheme (MCS). In the LTE and NR PHY layer, decoding success and PHY layer metrics are characterized on units of data known as \textit{transport blocks}. For each transport block, the MCS denotes the most spectrally efficient scheme that simultaneously ensures that a maximum block error rate (BLER) is not exceeded on average. In addition, $L$ is equal to the number of transport blocks allotted to a single user, and $L \leq 2$ in LTE and NR even when the number of antenna ports $K \geq 2$ \cite{sesia2011lte}, \cite{dahlman20185g}. For ease of exposition, we refer to the wideband SINR metric of a data block as the \textit{post-equalizer SINR} henceforth. 
	
If $\hat{\gamma}[m]$ is the post-equalizer SINR calculated for the $m^{\text{th}}$ data block, then $CQI[m]=f(\hat{\gamma}[m]) \in \mathbb{N}$ is the corresponding CQI, where $f(\cdot)$ is a monotonically non-decreasing function of SINR. Considering a CSI estimation window of length $T_{CSI}$ data blocks, the wideband CQI measurements corresponding to the $T_{CSI}$ subframes are collectively represented by the vector $\mathbf{CQI} = \big[ CQI[0], CQI[1], \cdots, CQI[T_{CSI} - 1] \big] \in \mathbb{N}^{T_{CSI}}$.
In this work, we consider the following conventional CSI quantization and limited feedback schemes:
\begin{enumerate}
	\item minimum CSI feedback, where $\textmd{min}(\mathbf{CQI})$ is periodically fed back after every $T_{CSI}$ data blocks, and
	\item median CSI feedback, where $\textmd{med}(\mathbf{CQI})$ is periodically fed back after every $T_{CSI}$ data blocks.
	\item maximum CSI feedback, where $\textmd{max}(\mathbf{CQI})$ is periodically fed back after every $T_{CSI}$ data blocks.
\end{enumerate}
It is evident from Fig. \ref{Fig_Illustrate_Problem} that $\textmd{min}(\cdot)$, $\text{med}(\cdot)$ and $\text{max}(\cdot)$ quantization functions result in overoptimistic CQI values due to inaccurate pilot-aided SINR estimates. In addition, it is important to note that pilot-aided SINR estimates are accurate (a) in the absence of interference, and (b) when a pilot-bearing OFDM symbol is interference-impaired. Therefore, a key challenge is to accurately estimate the post-equalizer SINR with low computational complexity when pilot-resources are interference-free but data resources are not. In this work, we consider potential interference of pilots that are used to estimate the channel response as well as the SINR\footnote{In cellular standards up to LTE-A Pro, the same pilot signal is used for channel estimation as well as SINR estimation, such as the cell-specific reference signal (CRS). Other pilots such as Demodulation Reference Signals (DMRS) can also be used to estimate the SINR, \textit{conditioned on the precoding matrix} ($\mathbf{W}$) used \cite{dahlman20185g}.}. In the subsection below, we discuss the post-equalizer signal model of an interference-impaired non-pilot OFDM symbol, when the downlink channel is accurately estimated by interference-free pilot signals.

\begin{figure*}[!t]
	\centering
	\includegraphics[width=5in]{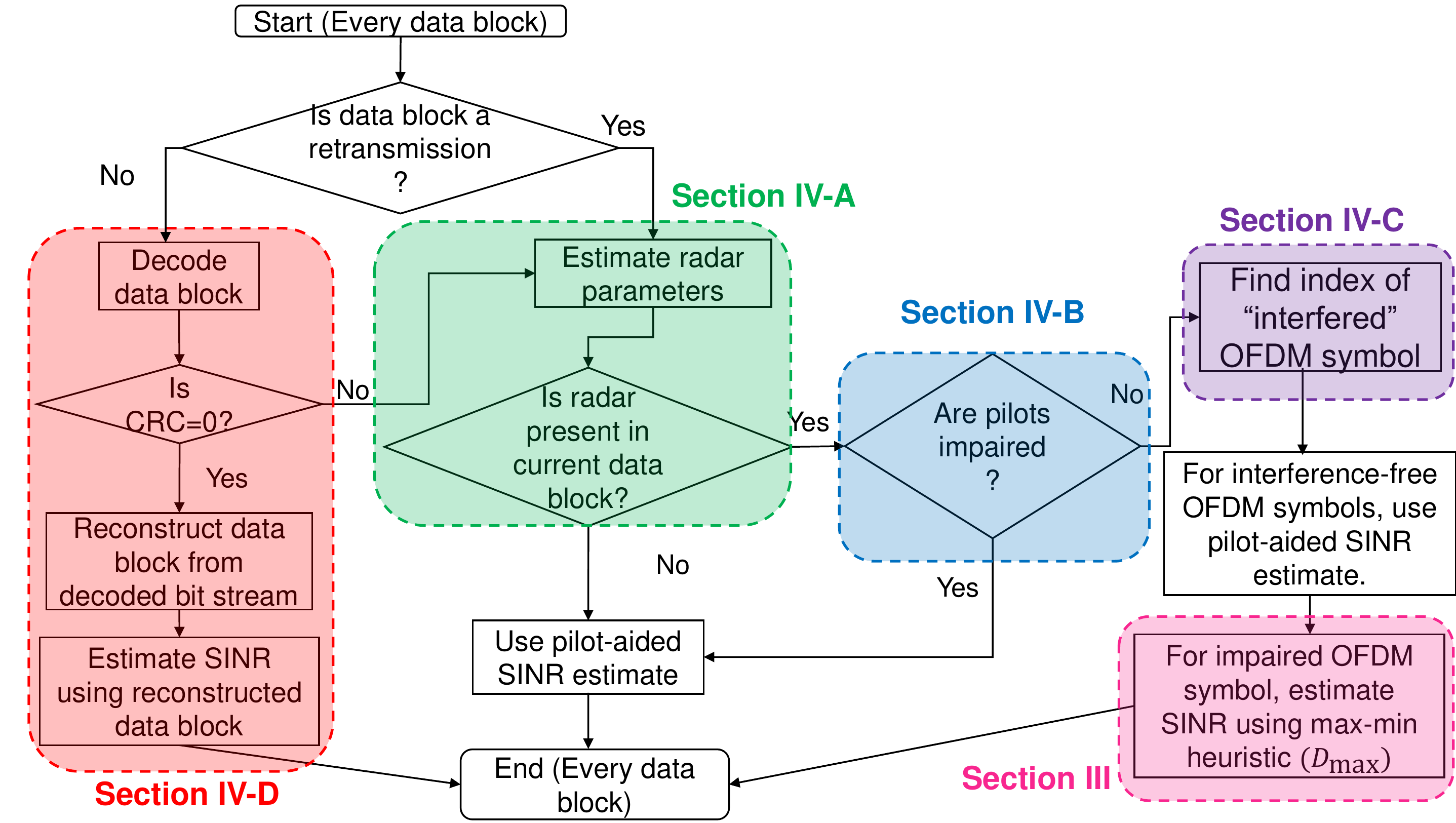}\\
	[-2ex]
	\caption{Flowchart of the hybrid SINR estimation framework for the cellular downlink in the presence of a pulsed radar. The SINR estimation algorithm is executed for every data block.}
	\label{SINR_Estim_Framework}
\end{figure*}
\vspace{-8pt}
\subsection{Baseband Equivalent Post-Processed Signal Model} \label{Post_Proc_Baseband_Eq_Signal_Model}
Using equation (\ref{Post_Eq_SINR_x_hat}), the resultant post-equalizer baseband signal on subcarrier $k$ will be
\begin{align}
\label{Equalized_signal}
\hat{\mathbf{x}}_k = \hat{\mathbf{G}}_{\mathtt{MMSE},k} \mathbf{H}_k \mathbf{W}_k \mathbf{x}_k + \hat{\mathbf{G}}_{\mathtt{MMSE},k} \mathbf{h}_{r,k} i_k + \hat{\mathbf{G}}_{\mathtt{MMSE},k} \mathbf{w}_k. 
\end{align}
To develop a tractable analytical model, we make the following assumptions.
\begin{assumption}\label{Assumpt_AWGN_equiv_model}
	In a coherence block of $K_{RB}$ subcarriers, the post-equalized signal on each antenna in the presence of accurate channel estimates can be written as
	\begin{align}
	\label{Post_processed_signal_model}
	y_k = x_k + \sqrt{P_{r,k}} e^{j \phi_k} + n_k,
	\end{align}
	where $P_{r,k}$ and $\phi_k$ is the post-equalizer interference power and phase, and $n_k \sim \mathcal{CN}(0, \sigma^2_n)$ is the additive white gaussian noise. The transmitted symbol $x_k \sim \mathtt{U}[\mathcal{X}]$, where $\mathcal{X}$ is the set of QAM symbols, and $\mathtt{U}[\cdot]$ denotes the uniform distribution.
\end{assumption}
\begin{assumption}\label{assump_amplitude}
	Interference power $P_{r,k}$ is constant in the coherence block $k \in \{k_0 + 1, k_0 + 2, \cdots, k_0 + K_{RB}\}$. 
\end{assumption}
\begin{assumption}\label{assump_phase}
	In the coherence block $k \in \{k_0 + 1, k_0 + 2, \cdots, k_0 + K_{RB}\}$  the interference phase is i.i.d. distributed as $\phi_k \sim \mathtt{U}(0, 2 \pi)$.
\end{assumption}
Assumption \ref{Assumpt_AWGN_equiv_model} approximates equation (\ref{Equalized_signal}) in a coherence block by an i\textit{nterference-impaired AWGN channel} on antenna port $l$ using $[\hat{\mathbf{x}}_k]_l = y_k, [(\hat{\mathbf{G}}_{\mathtt{MMSE},k} \mathbf{H}_k \mathbf{W}_k - \mathbf{I}_L)\mathbf{x}_k + \mathbf{\hat{G}}_{\mathtt{MMSE},k} \mathbf{h}_{r,k} i_k]_l = \sqrt{P_{r,k}} e^{j \phi_k}$, and $[\mathbf{\hat{G}}_{\mathtt{MMSE},k}  \mathbf{w}_k]_l = n_k$.

Assumption \ref{assump_amplitude} is accurate in a coherence block for LFM radar signals\footnote{This approximation is accurate in a coherence block of width $\sim 100$ kHz, in the case of a continuous-wave (CW) radar.} with $f_s T_\mathtt{pul} \gg 1$ where the radar spectrum is approximated by \cite{charles1993radar}
\begin{align}
\label{Radar_FT_approx}
I_\mathtt{LFM}(f) & \approx \sqrt{\frac{P_\mathtt{rad} T_\mathtt{pul}}{f_s}} e^{-j \big(\tfrac{\pi T_\mathtt{pul} (f - \Delta f_r)^2}{f_s} + \tfrac{\pi}{4} \big)} 
\end{align}
For an arbitrarily chosen contiguous subcarrier sequence $\{f_i\}$ for $i=1,2,\cdots,K_{RB}$ and $\Delta f_r = 0$, assumption \ref{assump_phase} approximates the sequence of square-law phase terms using
\begin{align}
\Big\{\frac{\pi T_\mathtt{pul} (k+i)^2 \Delta f^2}{f_s} \Big \} \stackrel{\text{i.i.d.}}{\sim} \mathtt{U}[0,2 \pi],  i=1,\cdots,K_{RB},
\end{align} 
after marginalization over a broad range of $0 \leq k \leq (N_\mathtt{sub} -(K_{RB} + 1))$,$f_s$, and $T_\mathtt{pul}$, where $\Delta f$ is the subcarrier spacing. Note that this approximation is used for ease of exposition, and the general form of the distribution is derived in the next section for scenarios when the phase offset of the radar interference is known.

\subsection{Post-Equalizer SINR Estimation Framework}	
To accurately estimate the wideband SINR metric in equation (\ref{SINR_Mapping_EESM_avg}), the receiver must be able to detect the presence of interference and localize its position in the time-frequency grid, so that the appropriate SINR estimate can be used for each RE. In this work, we propose a comprehensive framework to accurately estimate the post-equalizer SINR of a data block. Fig. \ref{SINR_Estim_Framework} shows the flowchart of the proposed framework, which is composed of the following key stages: 	
\begin{enumerate}
	\item Estimation of the radar repetition rate, which is used by the receiver to predict when radar interference will occur in the future.
	\item Detection of pulsed radar interference on pilot-bearing OFDM symbols, which is used by the receiver to determine the accuracy of pilot-aided SINR estimate for the interference channel.
	\item Detection of the contaminated OFDM symbol index. The receiver uses the \textit{max-min heuristic-aided SINR estimation method} only for the interference-impaired data-bearing OFDM symbol.
\end{enumerate}
In the following section, we characterize the properties of the proposed max-min heuristic that blindly estimates the post-equalizer SINR of a coherence block blindly in the presence of accurate downlink channel estimates, and the rest of the framework will be discussed in section \ref{Hyb_SINR_Estim_Framwrk}.

\section{Low Complexity Max-Min Heuristic to Estimate Post-Equalizer SINR}\label{Low_Complex_Heur_SINR_Estimate}
To estimate the \textit{post-equalizer interference and noise amplitude} in a coherence block of contiguous subcarrier indices $\{1, 2, \cdots, K_{RB}\}$, the heuristic $D_{\mathtt{max}}$ is defined as the maximum of the distance between a received symbol and its nearest neighboring constellation point, given by 
\begin{align}
\label{Prop_heuristic}
D_{\mathtt{max}} = \underset{k=1,2,\cdots,K_{RB}}{\max}\ \ \underset{x^{(j)} \in \mathcal{X}}{\min}\  \|y_k - x^{(j)} \|_2.
\end{align}
It is important to note that the additional complexity incurred is due to the $\max$ operation, since calculating the nearest neighbor distance is already a part of the downlink baseband processing chain in modern cellular systems. Therefore, the maximum \textit{minimum distance} calculated over a small coherence block of $K_{RB}$ REs incurs an additional computational complexity of $O(N_{RB} K_{RB})$, where $N_{RB}$ is the number of coherence blocks in the OFDM symbol. The cumulative distribution function (CDF) of $D_{\mathtt{max}}$ can be written as $F_{D_{\mathtt{max}}}(d) = \text{Pr} [D_{\mathtt{max}} \leq d],d \geq 0$. Defining the nearest-neighbor distance of the received symbol on the $l^{th}$ subcarrier as $D_{l} \triangleq \min_{x^{(j)} \in \mathcal{X}}\ \|y_l - x^{(j)} \|_2$, after defining $y \triangleq y_R + jy_I, n \triangleq n_R + j n_I$, and $x^{(j)} \triangleq x^{(j)}_R + jx^{(j)}_I \in \mathcal{X}$, the nearest neighbor distance can be simplified as
\begin{align}
\label{Expanding_gen_form_of_Dmin}
D  = & \big[(x_R - x^{(j)}_{R} + \sqrt{P_r} \cos \phi  + n_R)^2 + (x_I - x^{(j)}_{I} + \sqrt{P_r} \sin \phi  + n_I)^2 \big]^{1/2}.
\end{align} 

The following proposition denotes the marginal distribution of $D_l$ ($F_{D_l}(d)$) as a function of interference power $P_{r}$, and noise variance $\sigma^2_n$. 
\begin{proposition}
The CDF of $D$ can be written as
\begin{align}
\label{CDF_Dmin_Total_Prob_Thm}
F_{D_l}(d) = & \sum_{x \in \mathcal{X}} \int\displaylimits_{\mathcal{A}_{\Phi}} \int\displaylimits_{ \mathcal{A}_{n_R}} \int\displaylimits_{\mathcal{A}_{n_I}} \mathbbm{1}[D_l \leq d|x, n, \phi] p_{X}(x) f_{\Phi} (\phi) f_{N_R} (n_R) f_{N_I} (n_I) d \phi d n_R d n_I, d \geq 0,
\end{align}
where $\mathbbm{1}[\cdot]$ denotes the indicator function, $x \sim \mathtt{U}[\mathcal{X}]$, $p_X(x)$ is the probability mass function of $x \in \mathcal{X}$, $f_{\Phi} (\phi)$ is the density function of the radar phase $\phi$, $f_{N_R} (n_R)$  and $f_{N_I} (n_I)$ are the density functions of the real and imaginary components of noise, respectively. The corresponding integration regions are $\mathcal{A}_\Phi, \mathcal{A}_{N_R}$ and $\mathcal{A}_{N_I}$, respectively.

\end{proposition}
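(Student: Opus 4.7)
The plan is to derive the stated expression by a direct application of the law of total probability, conditioning on all the random quantities that enter the expression for $D_l$ in equation (\ref{Expanding_gen_form_of_Dmin}). Observe that given the transmitted symbol $x$, the interference phase $\phi$, and the real and imaginary noise components $n_R, n_I$, the nearest-neighbor distance $D_l$ becomes a deterministic function (the interference amplitude $\sqrt{P_r}$ is treated as a known constant by Assumption \ref{assump_amplitude}, and the minimization over $x^{(j)} \in \mathcal{X}$ is deterministic given the received sample $y$). Hence $\Pr[D_l \leq d \,|\, x, \phi, n_R, n_I] = \mathbbm{1}[D_l \leq d \,|\, x, n, \phi]$, which is the integrand appearing in (\ref{CDF_Dmin_Total_Prob_Thm}).

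First I would write
\begin{align*}
F_{D_l}(d) = \Pr[D_l \leq d] = \mathbb{E}\bigl[\mathbbm{1}[D_l \leq d]\bigr],
\end{align*}
and then expand the expectation by iterating over the joint distribution of $(x, \phi, n_R, n_I)$. Since $x \in \mathcal{X}$ is discrete with pmf $p_X(x)$ while $\phi, n_R, n_I$ are continuous, the expectation splits into a sum over $x$ and a triple integral over the continuous variables, producing exactly the form stated in (\ref{CDF_Dmin_Total_Prob_Thm}).

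Next I would justify the factorization of the joint density $p_X(x)\,f_\Phi(\phi)\,f_{N_R}(n_R)\,f_{N_I}(n_I)$. This uses three independence facts already embedded in the signal model of Assumption \ref{Assumpt_AWGN_equiv_model}: (i) the transmitted constellation symbol $x$ is independent of the receiver-side impairments (channel noise and interference phase); (ii) the interference phase $\phi$ is independent of the additive noise $n$; and (iii) because $n \sim \mathcal{CN}(0,\sigma_n^2)$ is a proper complex Gaussian, its real and imaginary parts $n_R, n_I$ are independent $\mathcal{N}(0,\sigma_n^2/2)$ random variables. Together these yield the product form of the joint density used in the integrand.

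Finally, I would specify the integration regions $\mathcal{A}_\Phi, \mathcal{A}_{N_R}, \mathcal{A}_{N_I}$ as the supports of $\phi, n_R, n_I$ respectively (for instance $\mathcal{A}_\Phi = [0,2\pi)$ under Assumption \ref{assump_phase}, and $\mathcal{A}_{N_R} = \mathcal{A}_{N_I} = \mathbb{R}$), to match the general notation of the proposition so that the result remains valid whether or not the distribution of $\phi$ is uniform. The main subtlety — rather than a true obstacle — is simply bookkeeping: confirming that $D_l$ is measurable and that Fubini's theorem applies, which is immediate here since the integrand is a bounded non-negative indicator. No hard estimates or analysis are needed; the proof is essentially a one-line conditioning argument followed by the standard expansion of an expectation with respect to a product measure.
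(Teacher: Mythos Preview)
Your proposal is correct and follows essentially the same route as the paper: condition on $(x,\phi,n_R,n_I)$ so that the event becomes deterministic, then use the mutual independence of $x$, $\phi$, and the real/imaginary noise components to factor the joint law and integrate. The paper's proof is simply a terser version of what you wrote; your added remarks on measurability, Fubini, and the explicit supports are fine elaborations but not a different approach.
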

\begin{proof}
The event $\{D_l \leq d|x,n,\phi\}$ is represented by the indicator function $\mathbbm{1}[\cdot]$. Using the fact that the interference power $P_r$, phase $\phi$ and the real and imaginary components of noise are independent of each other, we obtain the desired result when the event of interest is integrated over the appropriate regions of $\phi, n_R$ and $n_I$.
\end{proof}
The marginal distribution of $D_\mathtt{max}$ is given in the following theorem.
\begin{theorem} \label{Thm_Marginal_CDF_D_max_exchangeable}
	If the interference phase relationship is known, and given by $\phi_i = h_i (\phi_1)$, where $\phi_1 \sim \mathtt{U}[0,2 \pi]$ is the phase of the first symbol in the coherence block and $i=2,3,\cdots,K_{RB}$, the marginal CDF of $D_\mathtt{max}$ is
	\begin{align*}
	F_{D_\mathtt{max}} (d) = \tfrac{1}{2 \pi |\mathcal{X}|^{K_{RB}}} \int\limits_{0}^{2 \pi} \prod_{l=1}^{K_{RB}} \Big[ \sum_{x_l \in \mathcal{X}} F_{D_l} (d|x_l, \phi_1) \Big] d \phi_1.
	\end{align*}
\end{theorem}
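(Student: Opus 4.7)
The plan is to apply the tower property by conditioning on $\phi_1$, invoke conditional independence of the nearest-neighbor distances across subcarriers, and then marginalize over the data symbols on each subcarrier. Starting from the event $\{D_\mathtt{max} \leq d\} = \bigcap_{l=1}^{K_{RB}} \{D_l \leq d\}$, I would write
\begin{align*}
F_{D_\mathtt{max}}(d) = \mathbb{E}\big[\, \Pr\big[ \textstyle\bigcap_{l=1}^{K_{RB}} \{D_l \leq d\} \,\big|\, \phi_1 \big] \,\big].
\end{align*}
Since Assumption \ref{assump_phase} gives $\phi_1 \sim \mathtt{U}[0,2\pi]$, the outer expectation supplies the $\tfrac{1}{2\pi} \int_{0}^{2\pi} (\cdot)\, d\phi_1$ prefactor in the target formula.

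Next, I would establish conditional independence of $\{D_l\}_{l=1}^{K_{RB}}$ given $\phi_1$. By hypothesis, every remaining phase is $\phi_l = h_l(\phi_1)$, so once $\phi_1$ is fixed the entire phase vector is deterministic; Assumption \ref{assump_amplitude} further fixes the interference amplitude $\sqrt{P_r}$ throughout the coherence block. The only residual randomness in $D_l$ is then $(x_l, n_l)$, and by Assumption \ref{Assumpt_AWGN_equiv_model} both the symbols and the noise samples are independent across $l$. Therefore the conditional joint probability factorizes as
\begin{align*}
\Pr\big[\textstyle\bigcap_{l} \{D_l \leq d\} \,\big|\, \phi_1\big] = \prod_{l=1}^{K_{RB}} \Pr[D_l \leq d \,|\, \phi_1].
\end{align*}

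For each factor I would apply the law of total probability over $x_l \sim \mathtt{U}[\mathcal{X}]$, which has probability mass $p_X(x_l) = 1/|\mathcal{X}|$, to obtain
\begin{align*}
\Pr[D_l \leq d \,|\, \phi_1] = \sum_{x_l \in \mathcal{X}} p_X(x_l)\, F_{D_l}(d \,|\, x_l, \phi_1) = \frac{1}{|\mathcal{X}|} \sum_{x_l \in \mathcal{X}} F_{D_l}(d \,|\, x_l, \phi_1),
\end{align*}
where the conditional CDF $F_{D_l}(d \,|\, x_l, \phi_1)$ already absorbs the noise marginalization supplied by Proposition~1 applied with $\phi_l = h_l(\phi_1)$. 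Multiplying over $l$ pulls the $K_{RB}$ copies of $1/|\mathcal{X}|$ into the single factor $1/|\mathcal{X}|^{K_{RB}}$, which together with the $1/(2\pi)$ from the outer $\phi_1$-integration yields precisely the stated expression.

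The only genuinely delicate step is the conditional independence claim: one must be sure no latent coupling across subcarriers sneaks in through $P_r$ or through the common radar carrier. Assumption \ref{assump_amplitude} disposes of the amplitude (it is a deterministic constant on the block), and the hypothesis $\phi_l = h_l(\phi_1)$ removes all phase randomness beyond $\phi_1$; after that the factorization is an immediate consequence of the i.i.d.\ structure of $(x_l, n_l)$ in Assumption \ref{Assumpt_AWGN_equiv_model}. The rest is bookkeeping, and Fubini is applicable since the integrand is a bounded, measurable, nonnegative function on the compact product domain.
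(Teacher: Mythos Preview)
Your proposal is correct and follows essentially the same route as the paper: condition on $\phi_1$, use the equivalence $\{D_\mathtt{max}\leq d\}=\bigcap_l\{D_l\leq d\}$, invoke conditional independence of the $D_l$ once the phase vector is deterministic, and then marginalize over the i.i.d.\ symbols and the uniform $\phi_1$. One small quibble: you cite Assumption~\ref{assump_phase} for $\phi_1\sim\mathtt{U}[0,2\pi]$, but that assumption is the i.i.d.\ phase model which the theorem's hypothesis \emph{replaces}; the uniformity of $\phi_1$ is already part of the theorem statement itself, so no appeal to Assumption~\ref{assump_phase} is needed (or, strictly, appropriate).
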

\begin{proof}
	The marginal CDF of $D_\mathtt{max}$ can be written as 
	\begin{align}
	\label{Marginal_CDF_first_princip}
	F_{D_\mathtt{max}} (d) = \int\limits_{0}^{2\pi} \sum\limits_{\mathbf{x} \in \mathcal{X}^{K_{RB}}} \mathbb{P}[\max(\mathbf{D}) \leq d|\mathbf{x}, \phi] p_\mathbf{X}(\mathbf{x}) f_\Phi(\phi) d \phi,
	\end{align}
	where $\mathbf{D}=[D_1,\cdots,D_{K_{RB}}]$ and $\mathbf{x} = [x_1, \cdots, x_{K_{RB}}] \sim \mathtt{U} [\mathcal{X}^{K_{RB}}]$. We have $\{ \max(\mathbf{D}) \leq d\} \Leftrightarrow \bigcap_{l=1}^{K_{RB}} \{D_l \leq d\}$. Since the phase relationship is deterministic when conditioned on $\phi_1$, the minimum distances ($D_l$) are conditionally independent. Marginalizing over the densities of $\mathbf{X}$ and $\Phi$, and simplifying equation (\ref{Marginal_CDF_first_princip}), we obtain the desired result.
\end{proof}

\begin{figure}[!t]
	\centering
	\includegraphics[width=6.0in]{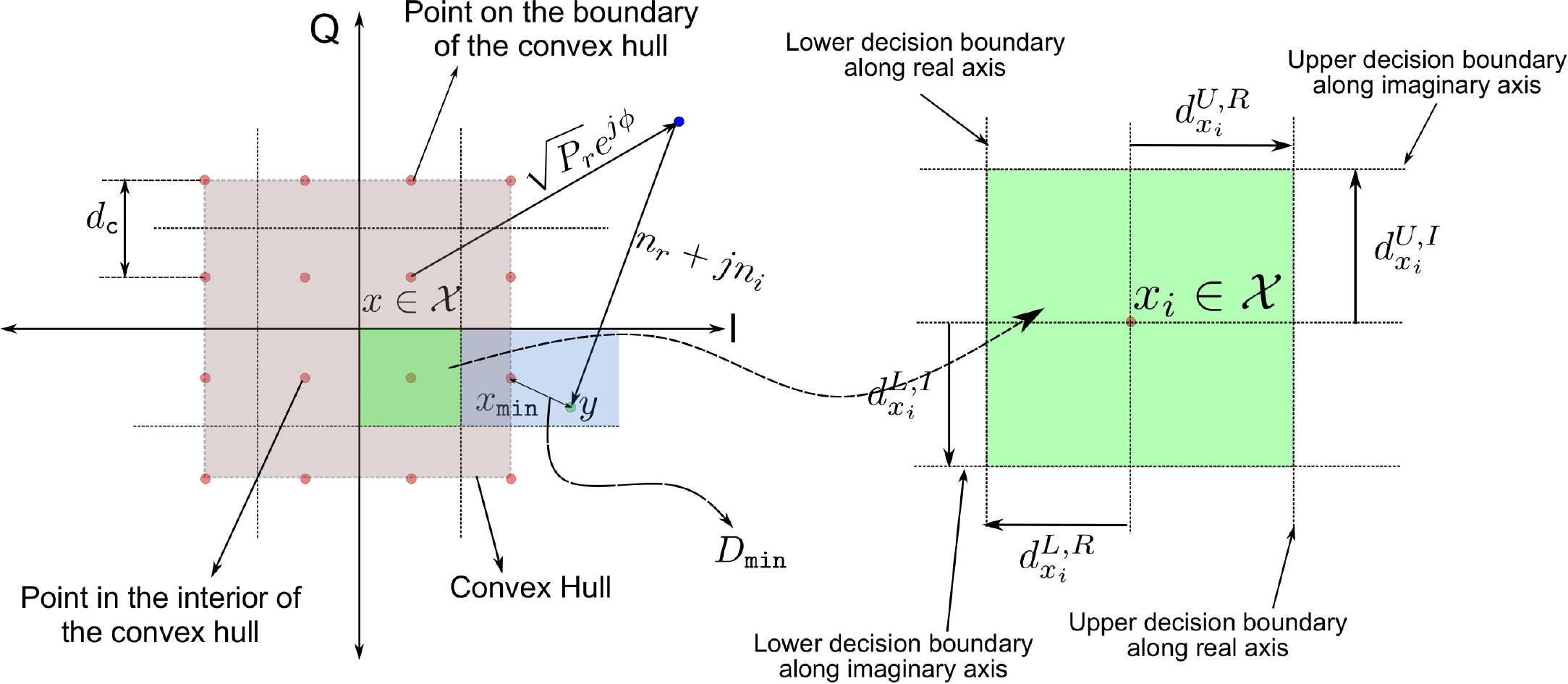}
	\caption{Illustration of transformation of the input to output symbols via interference and noise addition, and the resulting minimum distance $D_\mathtt{min} = \| y - x_\mathtt{min} \|_2$. The decision region of $x_\mathtt{min}$ is shaded in light blue. The lower figure shows the decision boundaries for $x_i \in  \mathcal{X}$ in the constellation diagram.}
	\label{Fig2_Input_to_Output_and_Dmin}
\end{figure}

In the case of the i.i.d. interference phase model, all the underlying random variables are independent of each other. Dropping the subcarrier index for notational simplicity, the marginal distribution is obtained using results from order statistics of i.i.d. random variables:
\begin{align}
\label{CDF_Dmax_iid_model}
F_{D_\mathtt{max}}(d) = [F_{D}(d)]^{K_{RB}}, d \geq 0.
\end{align}

\begin{table}[!t]
	\renewcommand{\arraystretch}{1.5}
	\caption{Decision region boundaries for constellation points in the first quadrant of 16-QAM}
	\label{Table_dist_to_decsn_bdries_16QAM}
	\centering
		\begin{tabular}{|c|c|c|c|c|}
			\hline
			$x_i \in \mathcal{X}_\mathtt{16QAM}$ & $d^{L,R}_{x_i}$ & $d^{U,R}_{x_i}$ & $d^{L,I}_{x_i}$ & $d^{U,I}_{x_i}$ \\
			\hline
			$\tfrac{1}{\sqrt{10}}(1+j)$ & $-\tfrac{1}{\sqrt{10}}$ & $\tfrac{1}{\sqrt{10}}$ & $-\tfrac{1}{\sqrt{10}}$ & $\tfrac{1}{\sqrt{10}}$ \\		
			\hline
			$\tfrac{1}{\sqrt{10}}(1+3j)$ & $-\tfrac{1}{\sqrt{10}}$ & $\tfrac{1}{\sqrt{10}}$ & $-\tfrac{1}{\sqrt{10}}$  & $\infty$\\		
			\hline
			$\tfrac{1}{\sqrt{10}}(3+j)$ & $-\tfrac{1}{\sqrt{10}}$ & $\infty$ & $-\tfrac{1}{\sqrt{10}}$ & $\tfrac{1}{\sqrt{10}}$ \\		
			\hline
			$\tfrac{1}{\sqrt{10}}(3+3j)$ & $-\tfrac{1}{\sqrt{10}}$  & $\infty$ & $-\tfrac{1}{\sqrt{10}}$ & $\infty$ \\
			\hline
		\end{tabular}
\end{table}

Based on their location w.r.t. the convex hull of the constellation, the transmitted symbols are classified as (a) interior points (denoted by set $\mathcal{X}_\mathtt{int}$) and (b) boundary points (denoted by set $\mathcal{X}_\mathtt{bnd}$), where $\mathcal{X}_\mathtt{bnd} \cap \mathcal{X}_\mathtt{int} = \emptyset$. 

Fig. \ref{Fig2_Input_to_Output_and_Dmin} illustrates the transformation of the baseband transmitted signal due to interference and noise. For a QAM scheme with average unit power per symbol, let the minimum distance between two points be $d_\mathtt{c}$. If the nearest neighbor of $y$ is $x^{(j)} \in \mathcal{X}_\mathtt{int}$, then $0 \leq D \leq \tfrac{d_\mathtt{c}}{\sqrt{2}}$. On the other hand, if $x^{(j)} \in \mathcal{X}_\mathtt{bnd}$, then $0 \leq D \leq \infty$. For each $x \in \mathcal{X}$, we define its decision region $\mathcal{A}_{x}$ given by 
\begin{align}
\label{Decision_regions_const_pts}
\mathcal{A}_{x} = & \{(z_x, z_y) |\Re(x) + d^{L,R}_{x} \leq z_x \leq \Re(x) + d^{U,R}_{x},  \Im(x) + d^{L,I}_{x} \leq z_y \leq \Im(x) + d^{U,I}_{x}\},
\end{align}
where $\Re(x)$ denotes the real part and $\Im(x)$ the imaginary part of complex scalar $x$. The decision region parameters $d^{L,R}, d^{L,I}, d^{U,R}, d^{U,I}$ for constellation points in the first quadrant of 16-QAM are shown in Table \ref{Table_dist_to_decsn_bdries_16QAM} and illustrated in the bottom portion of Fig. \ref{Fig2_Input_to_Output_and_Dmin}. In the following lemma, we derive the conditional distribution of $\{D|X, \Phi\}$. 

\begin{lemma} \label{Lemma_Cond_Dist_D_min_X}
The conditional distribution of $\{D|X, \Phi\}$ is given by 
\begin{equation}
\label{Conditional_CDF_D_min_X_Phi}
F_{D}(d | x, \phi) = \begin{cases}
\sum_{x^{(j)} \in \mathcal{X}} \Big[ 1 - Q_1 \Big( \frac{\sqrt{2}\nu_j}{\sigma_n}, \frac{\sqrt{2} d}{\sigma_n}\Big) \Big] & \text{if } 0 \leq d \leq \tfrac{d_\mathtt{c}}{2} \\
\sum_{x^{(j)} \in \mathcal{X}} \int_{0}^{d} \int_{\mathcal{A}_{\theta}(x^{(j)},z)} \frac{z}{\pi \sigma^2_n} e^{-\tfrac{z^2 + \nu^2_j + 2m_jz}{\sigma^2_n}} d\theta dz & \text{if } \tfrac{d_\mathtt{c}}{2} < d \leq \tfrac{d_\mathtt{c}}{\sqrt{2}} \\
F_D\big( \tfrac{d_\mathtt{c}}{\sqrt{2}} \big| x, \phi \big) + \sum_{x^{(j)} \in \mathcal{X}_\mathtt{bnd}} \int_{0}^{d} \int_{\mathcal{A}_{\theta}(x^{(j)},z)} \frac{z}{\pi \sigma^2_n} e^{-\tfrac{z^2 + \nu^2_j + 2m_jz}{\sigma^2_n}} d\theta dz & \text{otherwise}. 
\end{cases}
\end{equation}
where $m_j = m_{R,j} \cos \theta + m_{I,j} \sin \theta$, $\nu_j = (m^2_{R,j} + m^2_{I,j})^{1/2}$, $m_{R,j} = x_R - x^{(j)}_R + \sqrt{P_r} \cos \phi$, $m_{I,j} = x_I - x^{(j)}_I + \sqrt{P_r} \sin \phi$, and $Q_M(a,b)$ is the Marcum Q-function with parameters $M, a$ and $b$ \cite{abramowitz2012handbook}. The region of integration for $\Theta$ is given by $\mathcal{A}_{\theta}(x^{(j)}, z) = \big\{ \theta \big| d^{L,R}_{x^{(j)}} \leq z \cos \theta \leq d^{U,R}_{x^{(j)}}  , d^{L,I}_{x^{(j)}} \leq z \sin \theta \leq d^{U,I}_{x^{(j)}}, 0 \leq z \leq d \big\}$.
\end{lemma}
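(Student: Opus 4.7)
I would condition on $X=x$ and $\Phi=\phi$. Under Assumption~\ref{Assumpt_AWGN_equiv_model}, for each constellation point $x^{(j)}$ the displacement $y-x^{(j)}$ is a circularly symmetric complex Gaussian with mean $\mu_j=x-x^{(j)}+\sqrt{P_r}e^{j\phi}$ and per-component variance $\sigma_n^2/2$. Since the decision regions $\{A_{x^{(j)}}\}$ partition $\mathbb{C}$ up to a measure-zero boundary, the event $\{D\le d\}$ decomposes as the disjoint union
\begin{equation*}
\{D\le d\}=\bigsqcup_{x^{(j)}\in\mathcal{X}}\bigl\{y\in A_{x^{(j)}},\ \|y-x^{(j)}\|_2\le d\bigr\},
\end{equation*}
and $F_D(d|x,\phi)=\sum_{x^{(j)}\in\mathcal{X}}\Pr[y\in A_{x^{(j)}}\cap B(x^{(j)},d)\mid x,\phi]$. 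For each term I switch to local polar coordinates $(z,\theta)$ centred at $x^{(j)}$; a standard Jacobian calculation, with $m_j=m_{R,j}\cos\theta+m_{I,j}\sin\theta$ and $\nu_j=(m_{R,j}^2+m_{I,j}^2)^{1/2}$, shows that the joint density is the Rician form $\frac{z}{\pi\sigma_n^2}\exp\{-(z^2+\nu_j^2-2m_j z)/\sigma_n^2\}$ appearing in the lemma.

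The three branches of~\eqref{Conditional_CDF_D_min_X_Phi} correspond to three distinct geometric regimes for $A_{x^{(j)}}\cap B(x^{(j)},d)$, and I would dispatch each in turn. When $d\le d_c/2$, every Voronoi cell of a QAM constellation inscribes a disk of radius $d_c/2$, so $B(x^{(j)},d)\subset A_{x^{(j)}}$ for every $x^{(j)}$; the constraint $y\in A_{x^{(j)}}$ is vacuous, the term reduces to the Rice CDF of $\|y-x^{(j)}\|_2$, and the Marcum-$Q$ identity (with scale $\sigma_n/\sqrt{2}$) delivers $1-Q_1(\sqrt{2}\nu_j/\sigma_n,\sqrt{2}d/\sigma_n)$. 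When $d_c/2<d\le d_c/\sqrt{2}$, the ball begins to protrude through the flat edges of each cell but has not yet reached past any vertex; expressing the intersection in polar coordinates recovers exactly the angular sector $\mathcal{A}_\theta(x^{(j)},z)$ defined by $d^{L,R}_{x^{(j)}}\le z\cos\theta\le d^{U,R}_{x^{(j)}}$ together with the analogous bound on $z\sin\theta$, and integrating the Rician density over this region yields the stated double integral. When $d>d_c/\sqrt{2}$, the contribution of any interior point $x^{(j)}\in\mathcal{X}_\mathtt{int}$ freezes because the circumscribed disk of a square cell has radius $d_c/\sqrt{2}$, so $\mathcal{A}_\theta(x^{(j)},z)=\emptyset$ for $z>d_c/\sqrt{2}$; only boundary points, whose cells extend to infinity, continue to accumulate probability, giving the third branch after isolating the saturated base $F_D(d_c/\sqrt{2}|x,\phi)$.

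The main obstacle is the geometric bookkeeping behind $\mathcal{A}_\theta(x^{(j)},z)$. Although the inequalities $d^{L,R}_{x^{(j)}}\le z\cos\theta\le d^{U,R}_{x^{(j)}}$ and its imaginary counterpart are elementary, the admissible set of $\theta$ is a union of arcs whose endpoints depend nonlinearly on $z$ through $\arccos$ and $\arcsin$, with different structure for interior, edge and corner constellation points as tabulated in Table~\ref{Table_dist_to_decsn_bdries_16QAM}. In the third regime one must also take care that the boundary-point integral, written with lower limit $z=0$, is interpreted so that the portion $z\le d_c/\sqrt{2}$ is not double counted against $F_D(d_c/\sqrt{2}|x,\phi)$; this is automatic provided $\mathcal{A}_\theta$ is read as the angles for which the radial ray at distance $z$ still lies inside $A_{x^{(j)}}$. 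Once these regions are enumerated, each branch reduces to routine polar integration of a Gaussian density followed by summation over $\mathcal{X}$.
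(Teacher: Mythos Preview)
Your proposal is correct and follows essentially the same route as the paper's proof: condition on $(x,\phi)$, decompose $\{D\le d\}$ over the Voronoi/decision regions, pass to local polar coordinates to obtain the Rician joint density, and split into the three geometric regimes $d\le d_\mathtt{c}/2$, $d_\mathtt{c}/2<d\le d_\mathtt{c}/\sqrt{2}$, and $d>d_\mathtt{c}/\sqrt{2}$, invoking the Marcum-$Q$ identity in the first and restricting to $\mathcal{X}_\mathtt{bnd}$ in the last. Your remark about the lower limit $z=0$ in the third branch is well taken---the paper's marginal formula~\eqref{Marginal_CDF_D_min} in fact uses $d_\mathtt{c}/\sqrt{2}$ as the lower limit, so the lemma statement should be read in the way you suggest.
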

\begin{proof}
Refer Appendix \ref{Appndx_A}. 
\end{proof}
By marginalizing $\{D|X, \Phi\}$ over $\{X, \Phi\}$, the distribution of $D$ is given by
\begin{align}
	 \label{Marginal_CDF_D_min}
	 F_{D}(d) = \begin{cases}
	 \frac{1}{2 \pi |\mathcal{X}|} \sum\limits_{x \in \mathcal{X}} \sum\limits_{x^{(j)} \in \mathcal{X}} \int\limits_{0}^{2\pi} \Big[ 1 - Q_1 \Big( \frac{\sqrt{2} \nu_j}{\sigma_n}, \frac{\sqrt{2} d}{\sigma_n} \Big) \Big] d \phi, & \text{if } 0 \leq d \leq \tfrac{d_\mathtt{c}}{2}, \\
	 F_{D} \big(\tfrac{d_\mathtt{c}}{2} \big) + \sum\limits_{x \in \mathcal{X}} \sum\limits_{x^{(j)} \in \mathcal{X}} \int\limits_{0}^{2 \pi} \int\limits_{\tfrac{d_\mathtt{c}}{2}}^{d} \int\limits_{\mathcal{A}_{\theta}(x_j,z)}  \tfrac{z}{2\pi^2 |\mathcal{X}| \sigma^2_n} e^{-\tfrac{z^2 + \nu^2 + 2m_j z}{\sigma^2_n}} dz d \theta d \phi & \text{if } \tfrac{d_\mathtt{c}}{2} \leq d \leq \tfrac{d_\mathtt{c}}{\sqrt{2}} \\
	 F_{D} \big( \tfrac{d_\mathtt{c}}{\sqrt{2}} \big) + \sum\limits_{x \in \mathcal{X}} \sum\limits_{x^{(j)} \in \mathcal{X}_\mathtt{bnd}} \int\limits_{0}^{2 \pi} \int\limits_{\tfrac{d_\mathtt{c}}{\sqrt{2}}}^{d} \int\limits_{\mathcal{A}_{\theta}(x_j,z)} \tfrac{z}{2 \pi^2 |\mathcal{X}| \sigma^2_n} e^{-\tfrac{z^2 + \nu^2 + 2 m_j z}{\sigma^2_n}} dz d \theta d \phi & \text{otherwise}.
	 \end{cases}
\end{align}
Using it in (\ref{CDF_Dmax_iid_model}), we obtain the distribution of $D_\mathtt{max}$ under the i.i.d. interferer phase model. To characterize the robustness and accuracy of the interference-plus-noise power estimate, we define the following metrics. 

\begin{definition}
	Overestimation probability, defined as $P_\mathtt{overest} (P_r, \sigma^2_n) = \mathbb{P} [D_\mathtt{max} \geq \sqrt{P_r + \sigma^2_n}]$, is the probability that $D_\mathtt{max}$ overestimates the interference-plus-noise compared to the average interference-plus-noise power. 
\end{definition}
\begin{definition}
	Accuracy, defined as $\mathbb{P}\Big[\Big|\log_{10} \Big(\frac{D^2_\mathtt{max}}{P_r + \sigma^2_n} \Big) \Big| \leq \delta \Big]$, is the probability that the estimate of interference-plus-noise-power lies within a range of $\pm \delta$ (dB) of the actual value.
\end{definition}
A higher overestimation probability implies a more robust SINR estimate. As we will see in section \ref{Sec:Num-Results}, SINR estimation using the  proposed heuristic results in robust link adaptation in the presence of pulsed radar interference.
\vspace{-8pt}
\subsection{Numerical Results}
Fig. \ref{Fig_CDF_Heur_Theor_Sim_16QAM} shows the theoretical and simulated distributions of $D_\mathtt{max}$ for different values of $P_r$ and $\sigma^2_n$, and 16-QAM modulated data symbols with a coherence block length of $K_{RB}=12$ are used. We observe that there is very good agreement between the theoretical and numerical results, validating the accuracy of equations (\ref{CDF_Dmin_Total_Prob_Thm})-(\ref{Marginal_CDF_D_min}). In order to study the estimation accuracy, mismatch in interference-plus-noise of the heuristic compared to that of the average interference-plus-noise power is plotted in Fig. \ref{Fig_Heur_Ratio_Performance_16QAM} for 16-QAM symbols. We observe that the SINR mismatch in the interference-impaired OFDM symbol is within $\pm 5$ dB for more than $90\%$ of the range of typical SINR values ($-5 \text{ to } 30$ dB) encountered in cellular communications. However, in typical scenarios where at most a single radar pulse impacts a data block, mismatch in the wideband SINR metric ($\gamma_\mathtt{avg}$/$\gamma_\mathtt{eesm}$) due to the robust heuristic will be partially mitigated by the availability of accurate pilot-aided SINR estimates for \textit{interference-free OFDM symbols}, as discussed in the following section.

Fig. \ref{Fig_Robustness_all} shows the probability of overestimation as a function of $(P_r,\sigma^2_n)$ for different QAM schemes. We observe that the robustness of the heuristic decreases when the modulation order increases from QPSK to 64-QAM, and that $P_\mathtt{overest}(P_r, \sigma^2_n) \geq 0.9$ for QPSK. The reason for this trend can be intuitively explained by considering the following. 

If the transmitted symbol is $x \in \mathcal{X}_\mathtt{bnd}$, the received symbol $y$ will have a high probability of lying outside the convex hull. In this case, if the nearest neighbor lies on the convex hull and is $x' \in \mathcal{X}_\mathtt{bnd}$, $\| y - x\|_2 $ and $\| y - x' \|_2$ will have the same order of magnitude. In other words, the penalty due to nearest-neighbor association (i.e. $x'$ instead of $x$) will be minimal. 

On the other hand, for any $x \in \mathcal{X}$, if the nearest neighbor lies within the convex hull i.e. $x' \in \mathcal{X}_\mathtt{int}$, then a constellation with a higher minimum distance ($d_\mathtt{c}$) will be more robust. Since $\| y - x' \| \leq d_\mathtt{c}/2$, constellations with a higher $d_\mathtt{c}$ intrinsically has a higher probability of overestimating $\| y - x \|_2$. 

Since QPSK (a) has the highest minimum distance of $d_{c, \mathtt{QPSK}} = 1/\sqrt{2}$, and (b) has all points lying on the convex hull, the max-min heuristic is more robust when compared to that for 16-QAM and 64-QAM.

\begin{figure}[!htbp]
	\centering
	\begin{subfigure}[t]{0.48\textwidth}
		\raggedleft
		\includegraphics[width=3.2in]{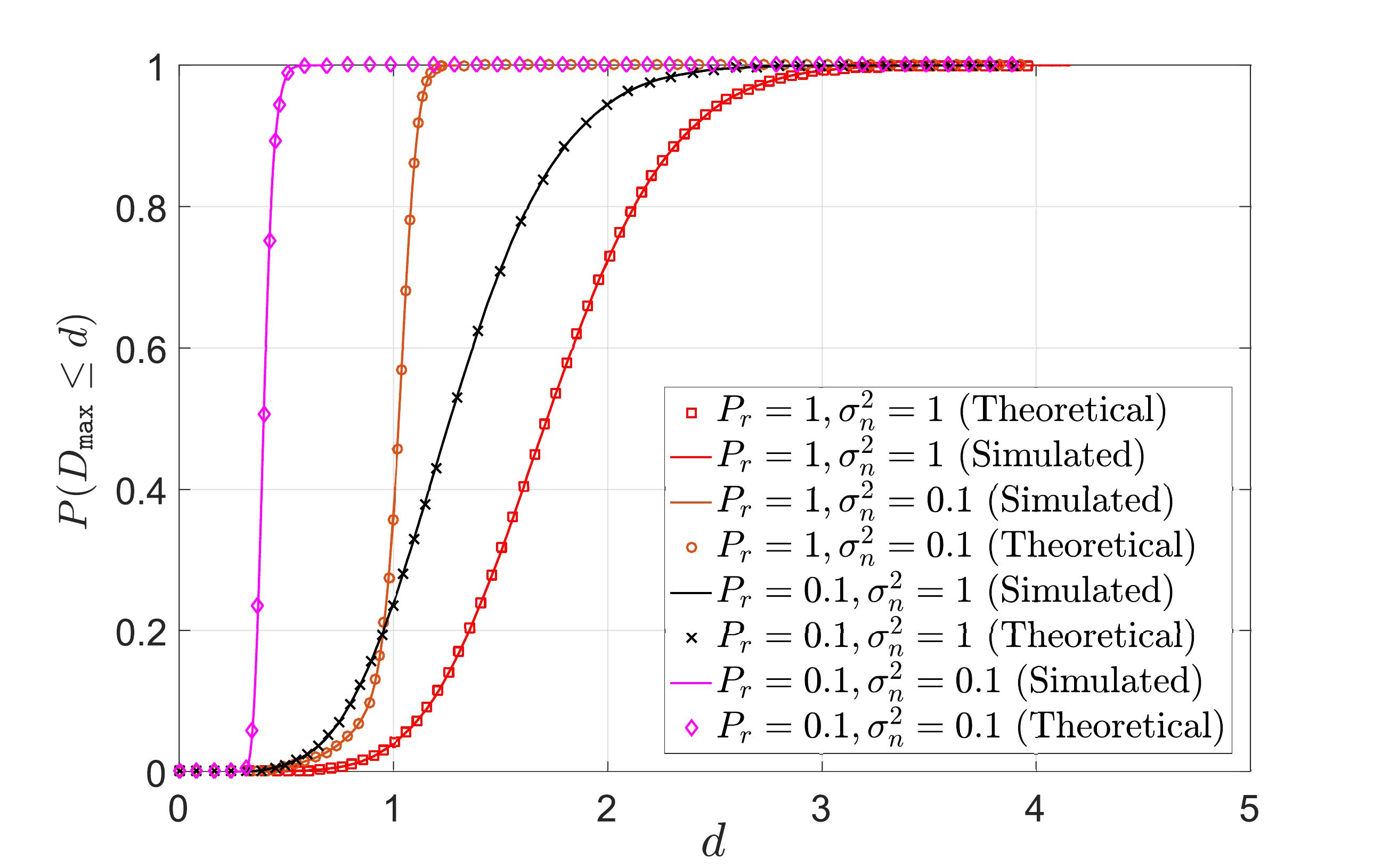}\\
		[-2ex]
		\caption{}
		\label{Fig_CDF_Heur_Theor_Sim_16QAM}	
	\end{subfigure}
	~
	\begin{subfigure}[t]{0.48\textwidth}
		\centering
		\includegraphics[width=3.2in]{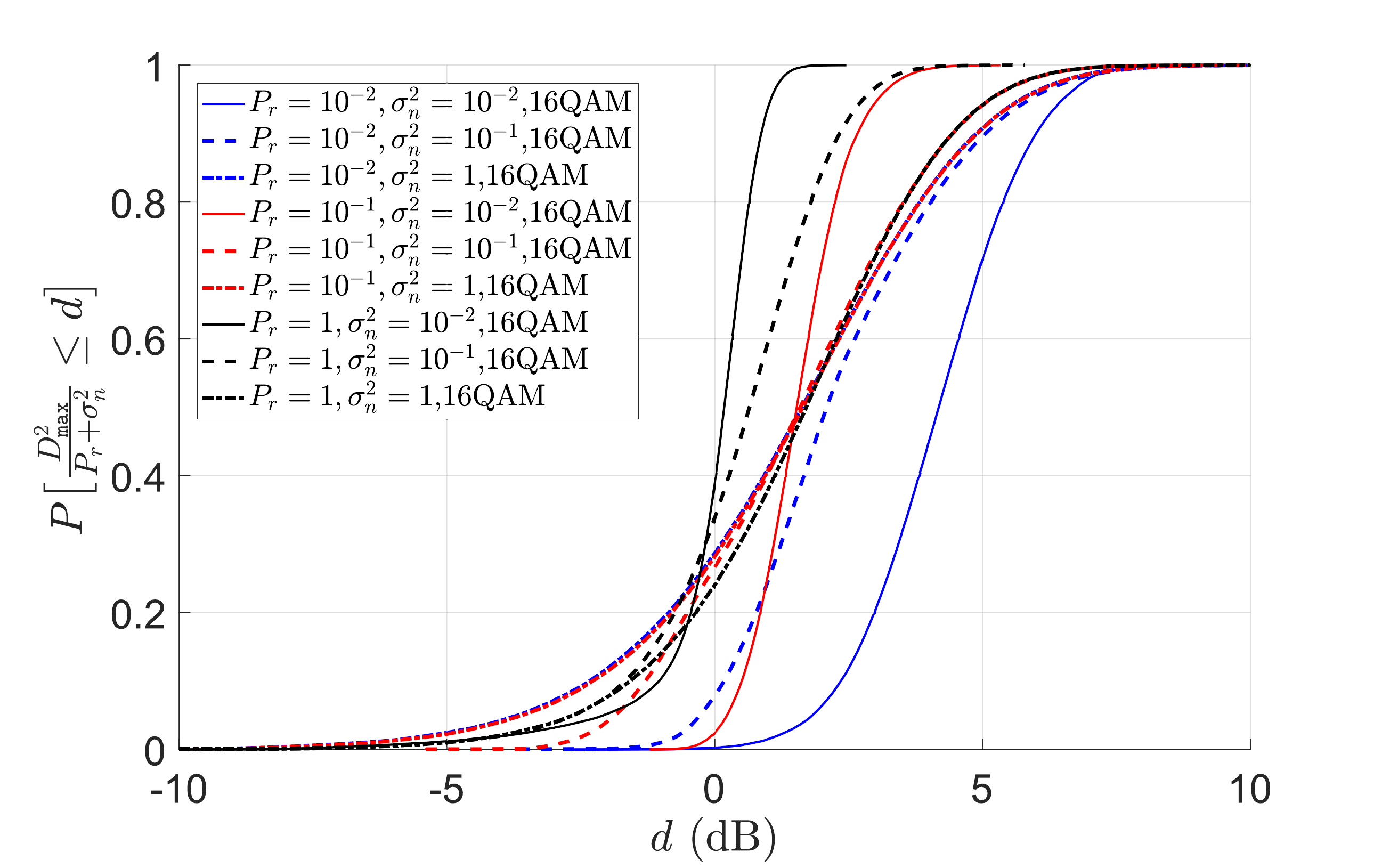}\\
		[-2ex]
		\caption{}
		\label{Fig_Heur_Ratio_Performance_16QAM}
	\end{subfigure}
	~
	\begin{subfigure}[t]{0.48\textwidth}
		\raggedleft
		\includegraphics[width=3.2in]{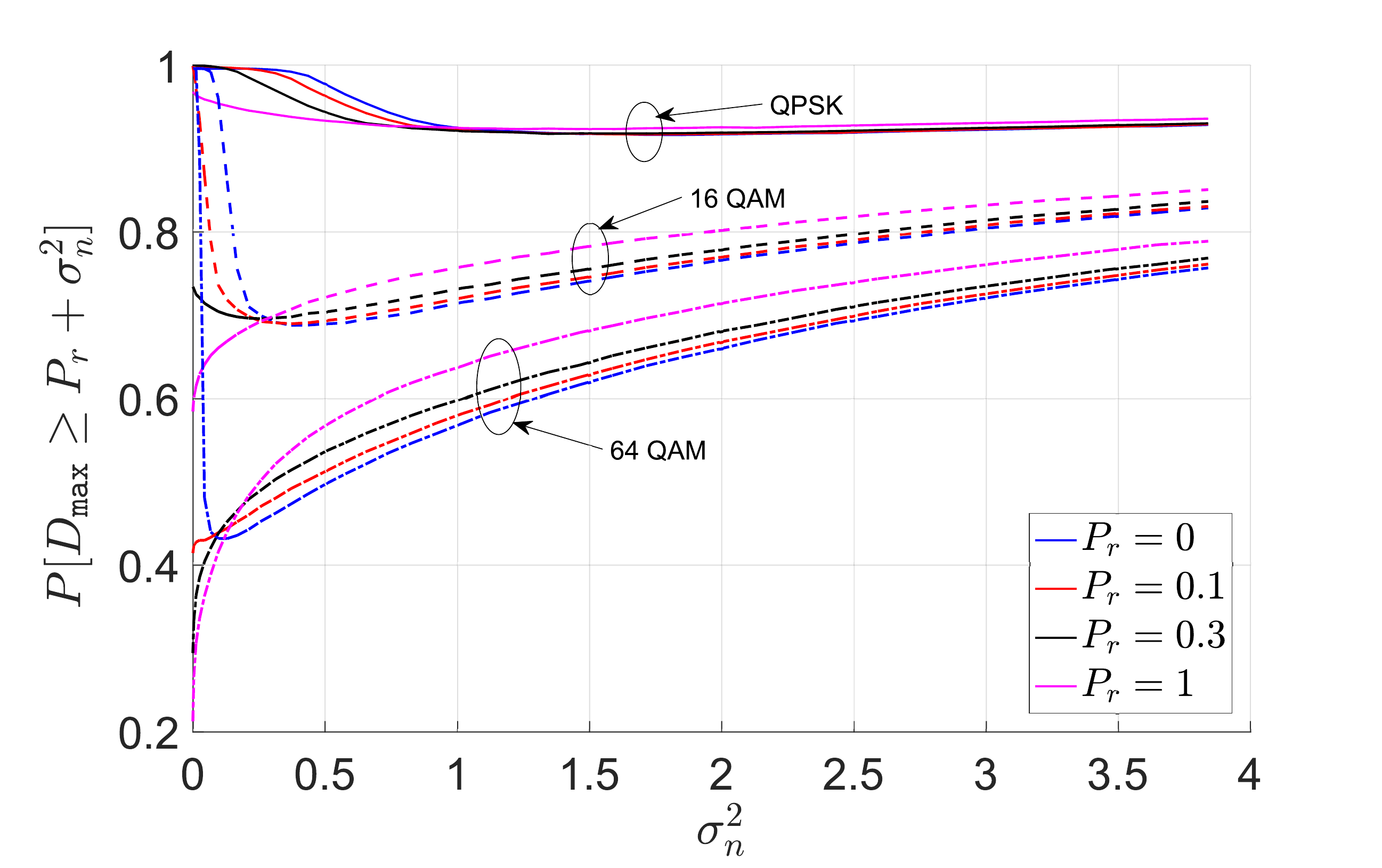}\\
		[-2ex]
		\caption{}
		\label{Fig_Robustness_all}
	\end{subfigure}
	~
	\begin{subfigure}[t]{0.48\textwidth}
		\centering
		\includegraphics[width=3.2in]{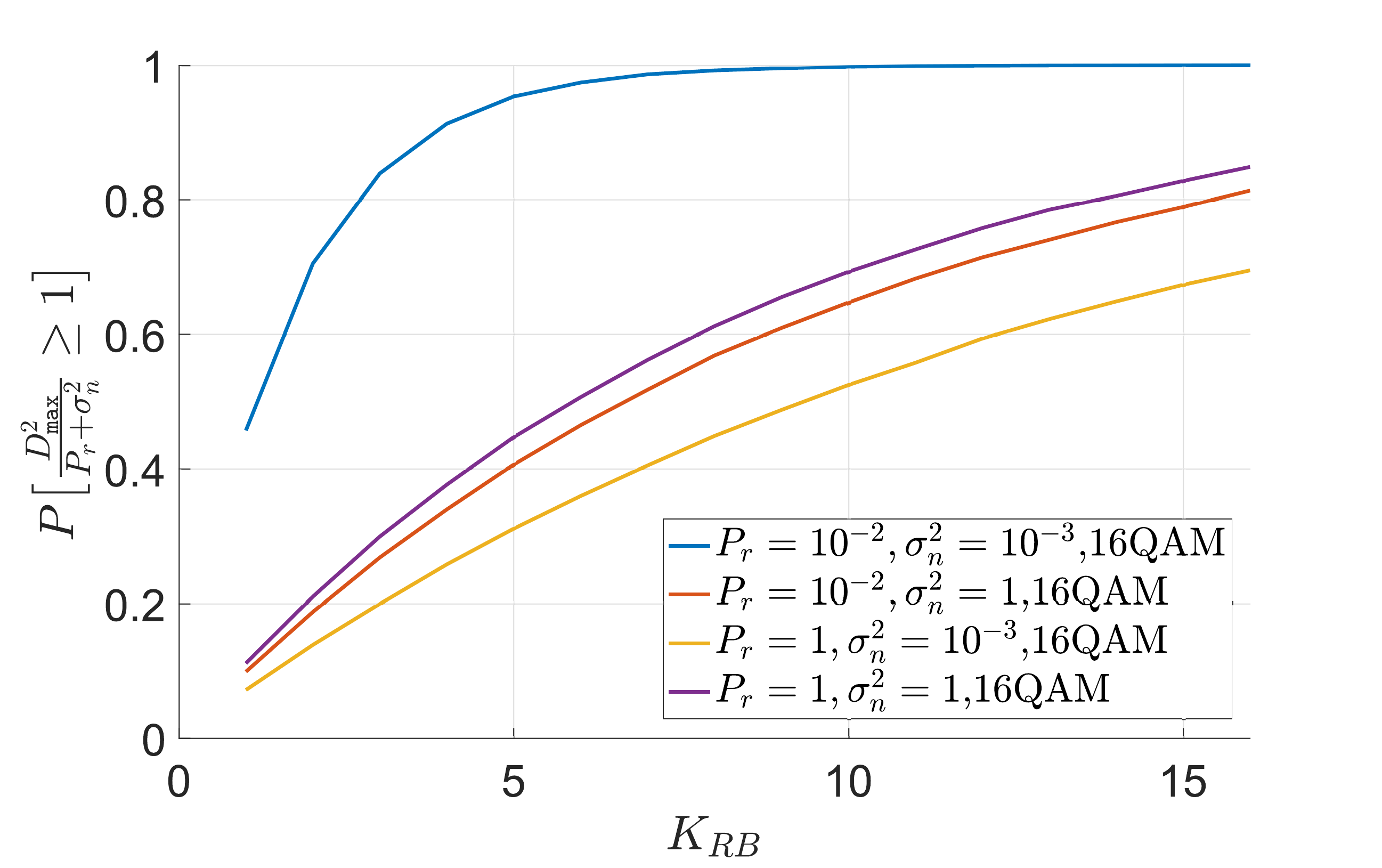}\\
		[-2ex]
		\caption{}
		\label{Fig_Robustness_16QAM_diff_M}
	\end{subfigure}
	~
	\begin{subfigure}[t]{0.47\textwidth}
		\raggedleft
		\includegraphics[width=3.2in]{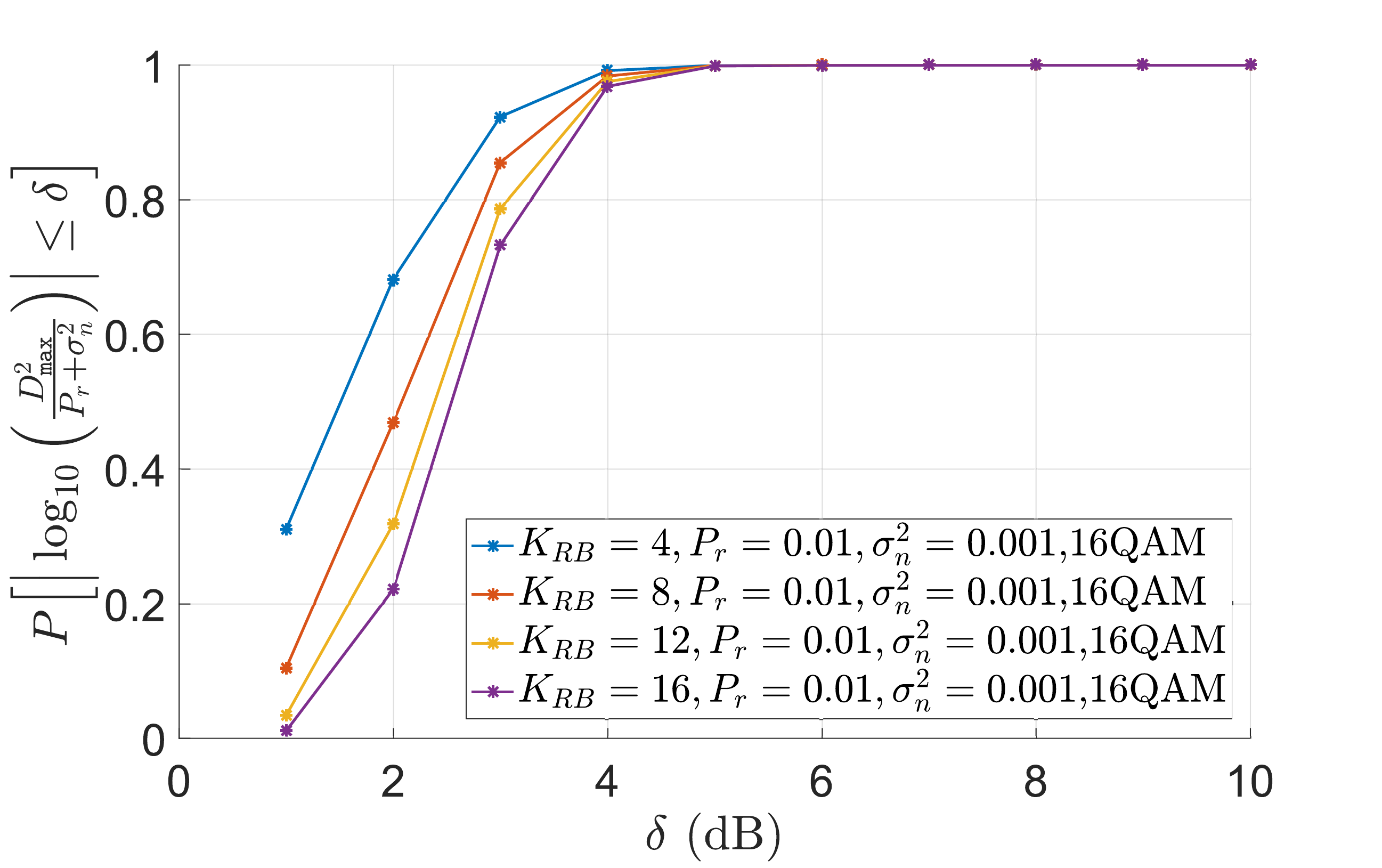}\\
		[-2ex]
		\caption{}
		\label{Fig_Accuracy_different_M_low_P_low_N}
	\end{subfigure}
	~
	\begin{subfigure}[t]{0.48\textwidth}
		\centering
		\includegraphics[width=3.2in]{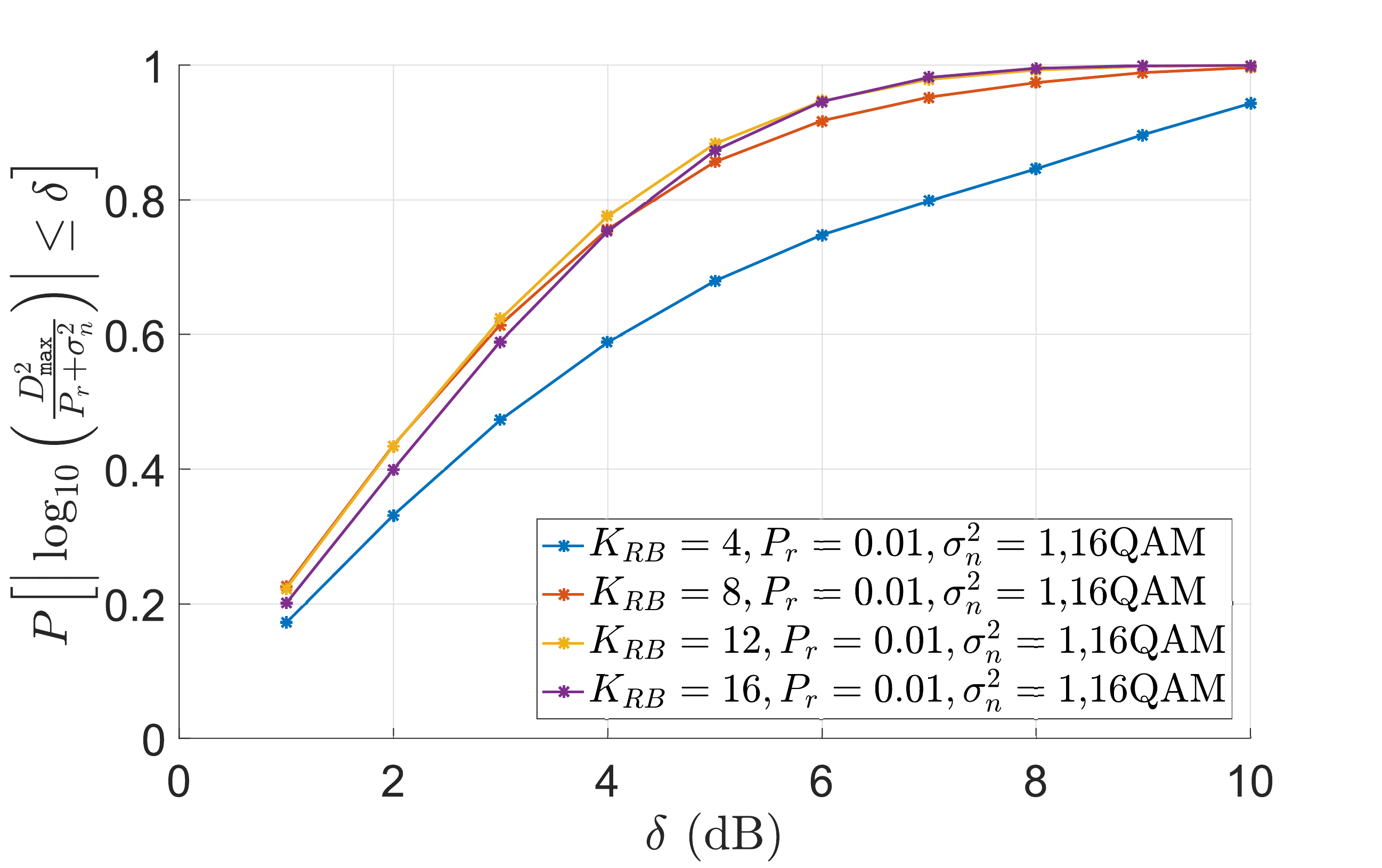}\\
		[-2ex]
		\caption{}
		\label{Fig_Accuracy_different_M_low_P_high_N}
	\end{subfigure}
	~
	\begin{subfigure}[t]{0.48\textwidth}
		\raggedleft
		\includegraphics[width=3.2in]{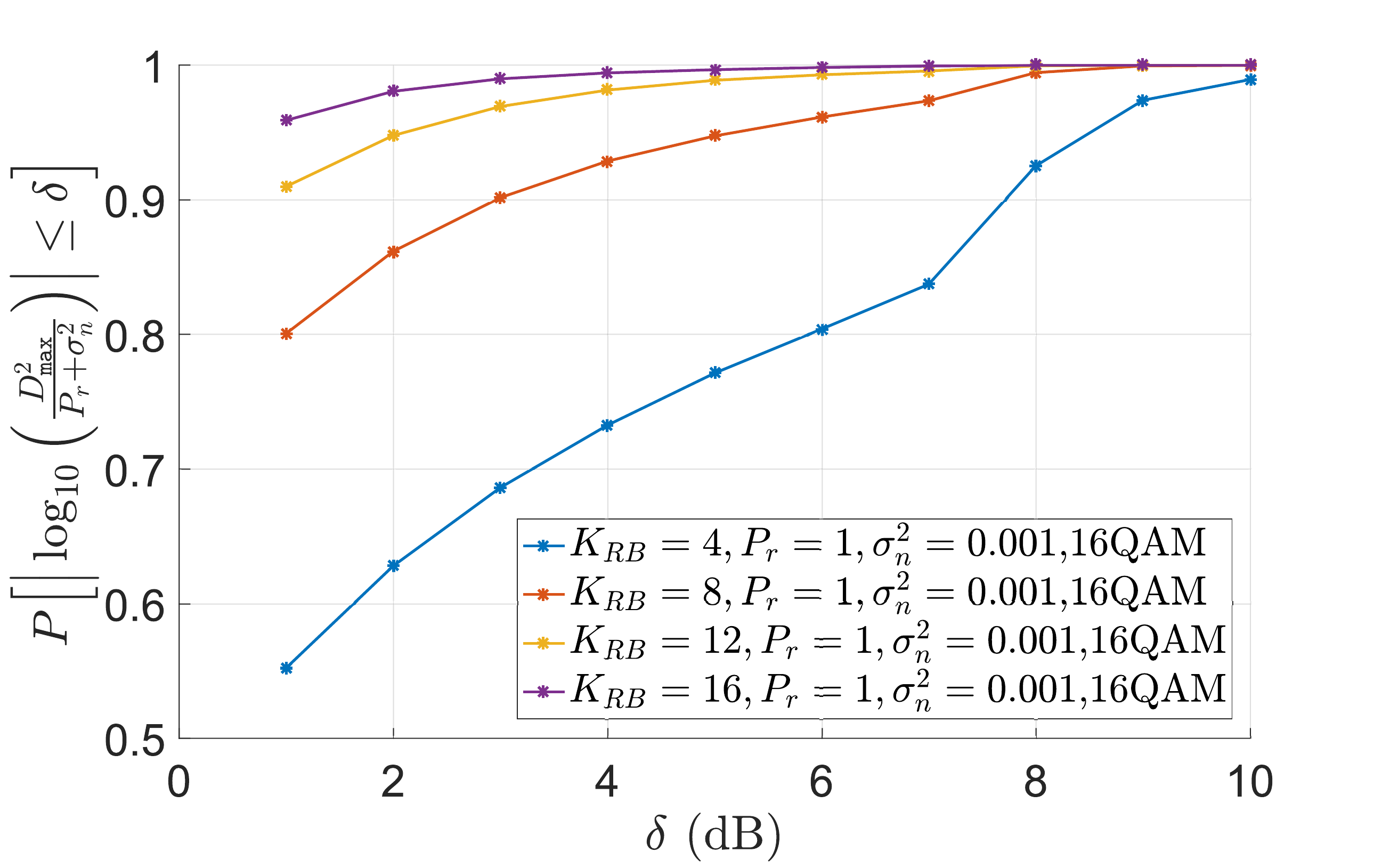}\\
		[-2ex]
		\caption{}
		\label{Fig_Accuracy_different_M_high_P_low_N}
	\end{subfigure}
	~
	\begin{subfigure}[t]{0.48\textwidth}
		\centering
		\includegraphics[width=3.2in]{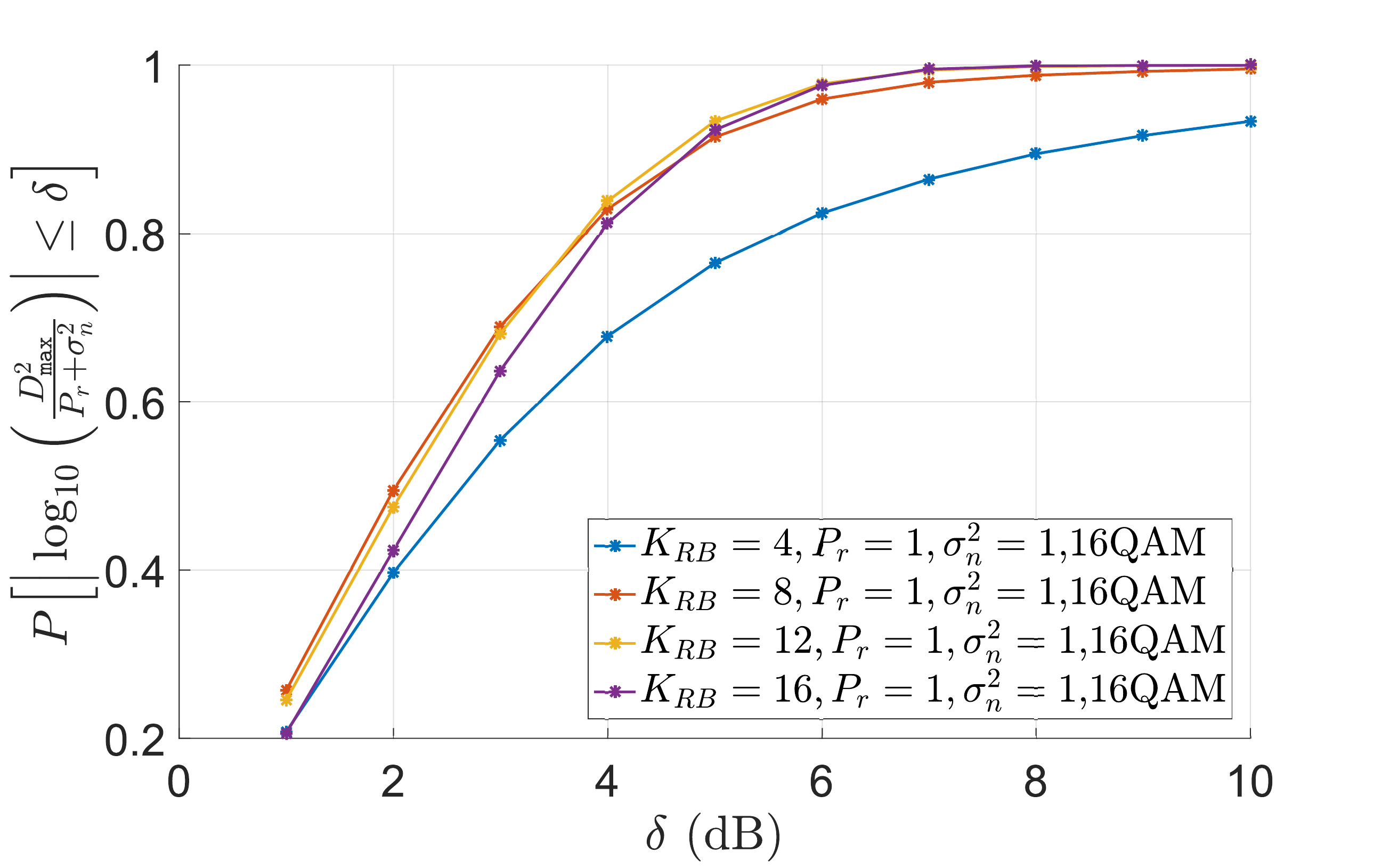}\\
		[-2ex]
		\caption{}
		\label{Fig_Accuracy_different_M_high_P_high_N}
	\end{subfigure}\\
	[-2ex]
	\caption{(a) Comparison of the simulated and theoretical distribution of $D_\mathtt{max}$ (equations (\ref{CDF_Dmin_Total_Prob_Thm}))-(\ref{Marginal_CDF_D_min}) for 16-QAM, (b) distribution of $D^2_\mathtt{max}/(P_r + \sigma^2_n)$ for 16-QAM, and (c) probability of overestimation $P_\mathtt{overest}(P_r, \sigma^2_n)$ for QPSK, 16-QAM and 64-QAM, (d) $P_\mathtt{overest}(P_r, \sigma^2_n)$ as a function of $K_{RB}$, and the accuracy metric for $K_{RB} \in \{4,8,12,16 \}$ for (e) $(P_r, \sigma^2_n) = (10^{-2}, 10^{-3})$, (f) $(P_r, \sigma^2_n) = (10^{-2}, 1)$, (g) $(P_r, \sigma^2_n) = (1, 10^{-3})$, and (h) $(P_r, \sigma^2_n) = (1, 1)$.}
	\label{Fig_Heuristic_perf_all_mods}
\end{figure}

\subsection{Accuracy and Robustness as a Function of $K_{RB}$}
Intuitively, decreasing $K_{RB}$ reduces the overestimation probability of the heuristic. This behavior can be mathematically explained as follows. Since $F_D(d) \in [0,1]$, increasing $K_{RB}$ decreases the value of the CDF $\mathbb{P}[D_\mathtt{max} \leq d]$, thereby increasing $\mathbb{P}[D_\mathtt{max} \geq d]$. Since robustness is characterized by the overestimation probability $\mathbb{P}[D_\mathtt{max} \geq \sqrt{P_r + \sigma^2_n}]$, by setting $d = \sqrt{P_r + \sigma^2_n}$, we see that decreasing $K_{RB}$ decreases $\mathbb{P}[D_\mathtt{max} \geq \sqrt{P_r + \sigma^2_n}]$ and vice-versa. Fig. \ref{Fig_Robustness_16QAM_diff_M} shows the robustness (probability of overestimating interference-plus-noise-power) of the heuristic for 16-QAM  as a function of $K_{RB}$, for different values of $(P_r, \sigma^2_n)$.

In contrast, it is not straightforward to infer the dependence of accuracy on $K_{RB}$ using mathematical arguments and hence, we use numerical studies to do the same. Figures \ref{Fig_Accuracy_different_M_low_P_low_N}-\ref{Fig_Accuracy_different_M_high_P_high_N} show the accuracy as a function of $\delta$ (dB) for $K_{RB} \in \{4,8,12,16\}$, for different values of ($P_r,\sigma^2_n$). We observe that (a) accuracy is not a monotonic function of $K_{RB}$, and (b) the optimal $K_{RB}$ that maximizes the accuracy of the SINR estimate depends on $(P_r, \sigma^2_n)$, as well as the accuracy threshold $\delta$. In addition, we notice that (a) in high SINR regimes, a low $K_{RB}$ ensures high accuracy (Fig. \ref{Fig_Accuracy_different_M_low_P_low_N}), (b) in interference-limited scenarios (high $P_r$ and low $\sigma^2_n$), a high $K_{RB}$ value ensures high accuracy (Fig. \ref{Fig_Accuracy_different_M_high_P_low_N}), and (c) in intermediate noise and interference conditions, a $K_{RB}$ value of $8-16$ yields similar accuracy performance (Figs. \ref{Fig_Accuracy_different_M_low_P_high_N} and \ref{Fig_Accuracy_different_M_high_P_high_N}).

Similar trends are observed for QPSK and 64-QAM. Unfortunately, a comprehensive mathematical analysis of the accuracy is beyond the scope of this paper. The key takeaway from Figures \ref{Fig_Accuracy_different_M_low_P_low_N}-\ref{Fig_Accuracy_different_M_high_P_high_N} is that there is no universal $K_{RB}$ value that maximizes the accuracy of the heuristic-aided SINR estimate. However, memory-based schemes that leverage knowledge of interference and noise conditions in the recent past, can be used to choose $K_{RB}$ to balance the robustness and accuracy of the heuristic-aided SINR estimate.

\begin{remark}
	It is worthwhile to observe that the max-min heuristic is independent on the multicarrier waveform used, and yields accurate SINR estimates of a coherence block in the received signal. 
\end{remark}

\section{Semi-Blind/Hybrid Post-Equalizer SINR Estimation Framework} \label{Hyb_SINR_Estim_Framwrk}
In this section, we describe the \textit{`semi-blind/hybrid'} post-equalizer SINR estimation framework, which uses pilot-aided (Section \ref{Pil-Aided-SINR-Estim-Methd}) as well as heuristic-aided (Section \ref{Low_Complex_Heur_SINR_Estimate}) SINR estimates. 

Let the data block contain $N_\mathtt{blk}$ OFDM symbols\footnote{In LTE and NR, the data block is termed as the transport block, which is often sent over a subframe consisting of 14 OFDM symbols.}, $\mathcal{A}_{NP}$ be the set of non-pilot OFDM symbol indices, and $k \in \mathcal{K}[m]$ be the subcarrier indices of data resource elements in the $m^{th}$ OFDM symbol. The SINR of the RE on the $k^{th}$ subcarrier of the $n^{th}$ OFDM symbol can be estimated using
\begin{align}
\label{SINR_Estim_Per_RE}
\hat{\gamma}[n,k] = \begin{cases}
\hat{\gamma}_p[n,k] & \text{for } n=1,2,\cdots, N_\mathtt{blk}, \mathbbm{1}[n], \text{ and } n\notin \mathcal{A}_{NP} \\
\hat{\gamma}_p [n,k] & \text{if } n\neq m, \mathbbm{1}[m], \text{ and } m \in \mathcal{A_{NP}} \\
\frac{1}{D_\mathtt{max}[m,k]} & \text{if } \mathbbm{1}[n] \text{ and } n \in \mathcal{A}_{NP},
\end{cases}
\end{align}
where $\mathbbm{1}[m]$ denotes the occurrence of pulsed radar interference on the $m^{th}$ OFDM symbol, $D_\mathtt{max} [m,k]$ is the heuristic for every RE in the coherence block of the contaminated OFDM symbol. If the coherence block contains $K_{RB}$ subcarriers, then  $D_\mathtt{max}[m,lK_{RB}+1] = \cdots = D_\mathtt{max}[m,(l+1)K_{RB}]$ for $l \in \mathbb{Z}$. 

To determine the appropriate SINR estimate to be used, the contaminated OFDM symbol needs to be known. As shown in Fig. \ref{SINR_Estim_Framework}, the following intermediate stages are necessary to acquire this information in practice:
\begin{enumerate}
\item Pulsed radar parameter estimation,
\item Detection of the pilot symbol interference, and
\item Detection of contaminated OFDM symbol index.
\end{enumerate}
\vspace{-5pt}
\subsection{Pulsed Radar Parameter Estimation}\label{subsec:T_rep_estim}
Most pulsed radars have a fixed repetition interval ($T_\mathtt{rep}$) for an extended duration of time (timescale of seconds). Since the interference is periodic, $T_\mathtt{rep}$ can be estimated by applying Fourier techniques on time-series data of received power per data block, resulting in a low-complexity baseband implementation. Subsequently, the UE can predict future subframes indices which will be impaired by radar interference. 
\vspace{-5pt}
\subsection{Threshold-based Detection of Pilot Interference}\label{Pilot_Interf_Detect_Threshold}
Interference on pilot symbols result in accurate SINR estimates \cite{Rao_VTC_LTE_LinkAdapt_2019}. Pilot interference can be detected by monitoring pilot-aided SINR estimates in every data block. For the $k^{th}$ data block, the receiver calculates the wideband SINR metric $\hat{\gamma}_{\mathtt{avg},p}[k]$ using pilot-aided methods. Using knowledge of $T_\mathtt{rep}$ and pilot-aided SINR estimates of previous data blocks, the wideband SINR metric for non-pilot radar interference ($\hat{\gamma}_{\mathtt{NPI},p}$) is computed.
If the current ($k^{th}$) block is impaired by interference, then 
\begin{enumerate}
\item if $\hat{\gamma}_{\mathtt{NPI},p} - \hat{\gamma}_{\mathtt{avg},p}[k] \geq \gamma_\mathtt{th}$, the $k^{th}$ block is considered to be impaired by pilot interference, and
\item if $\hat{\gamma}_{\mathtt{NPI},p} - \hat{\gamma}_{\mathtt{avg},p}[k] < \gamma_\mathtt{th}$, the $k^{th}$ block is considered to be impaired by non-pilot interference. 
\end{enumerate}
In practice, a typical value of the threshold is $\gamma_\mathtt{th} = 1$ dB, since the channel quality indicator (CQI) remains the same with a high probability for a SINR mismatch of $\pm 1$ dB \cite{rupp2016vienna}. 
\vspace{-5pt}
\subsection{Log Likelihood-based Detection of the Interference-Impaired OFDM Symbol}\label{subsec:OFDM_index_corrupted}
OFDM has a long symbol duration ($72\ \mu s$ in sub-6 GHz bands of LTE and NR). Hence, for wideband radars with a short pulse width ($T_\mathtt{pul}\sim 1\ \mu$s), the probability of two adjacent OFDM symbols being contaminated is \textit{almost zero}\footnote{In sub-6 GHz systems, the typical cyclic prefix duration is $5-10\ \mu$s. A radar pulse time-aligned with \textit{two consecutive OFDM symbols} will lie within the cyclic prefix (CP) of the second symbol. Due to CP removal in OFDM, radar interference will not impact the second OFDM symbol in such scenarios.} for sub-6 GHz cellular systems. Therefore, we ignore the possibility of multiple adjacent OFDM symbols being interfered. 

We use a log likelihood-based approach to detect the contaminated data-bearing OFDM symbol in every block, which is executed when pilots are detected to be interference-free. Algorithm \ref{Algo_detect_contamin_OFDM_symb} shows the proposed approach if the estimated $T_\mathtt{rep}$ indicates a single radar pulse within the data block\footnote{If $T_\mathtt{rep}$ estimates indicate that $m$ radar pulses will impair the data block, then Algorithm \ref{Algo_detect_contamin_OFDM_symb} outputs indices corresponding to the $m$ least values.}. The empirical log likelihood function models the hypothesis of noise-only impairment ($P_r = 0$), and is calculated for each non-pilot OFDM symbol. Intuitively, interference-free symbols statistically have a smaller nearest neighbor distance when compared to impaired symbols. In a coherence block, pilot-aided SINR estimates are constant for all REs. As a result, the proposed approach has a high probability of accurately detecting the impaired OFDM symbol.

\begin{algorithm}[t]
	\small
	\begin{algorithmic}[1]
		\State \textbf{Input:} In each data block, 
		\Statex \textit{Set of non-pilot OFDM symbol indices $\mathcal{A}_{NP}$} 
		\Statex \textit{Data subcarriers of $n^{th}$ OFDM symbol $\mathcal{K}[n]$}
		\Statex \textit{Post-processed OFDM symbols $y[n,k]\ \forall\ n \in \mathcal{A}_{NP}, k \in \mathcal{K}[n]$} 
		\State Find nearest neighbor of each $y[n,k]$ using $\hat{x}_\mathtt{nn}[n,k] = \underset{x \in \mathcal{X}}{\text{arg min }} \| y[n,k] - x\|_2\ \forall\ n \in \mathcal{A}_{NP}, k \in \mathcal{K}[n]$. 
		\State For each $(n,k)$, obtain the pilot-aided SINR $\hat{\gamma}_p[n,k]$ using (\ref{SINR_pilot}). 
		\State The contaminated OFDM symbol index ($\hat{n}$) is detected by minimizing the log-likelihood function using
		\begin{align}
		\label{Lklhood_hrstic_contam_OFDM_symb}
		\hat{n} = \underset{n \in \mathcal{A}_{NP}}{\text{arg min }} \frac{-1}{|\mathcal{K}[n]|} \sum_{k \in \mathcal{K}[n]} \hat{\gamma}_p[n,k] |y[n,k] - \hat{x}_\mathtt{nn}[n,k] |^2.
		\end{align}
		\State Go back to step 1 in the next data block.
	\end{algorithmic}
	\caption{Detection of Corrupted OFDM Symbol Index}
	\label{Algo_detect_contamin_OFDM_symb}
\end{algorithm}

\subsection{SINR Estimation Using Data Block Reconstruction}
If the transmitted symbols are known, then the post-equalizer SINR can be estimated at the receiver perfectly. If $x[n,k] \in \mathcal{X}$ is the transmitted symbol on RE $(n,k) \in \mathcal{D}$, and $y[n,k]$ is the corresponding post-processed received symbol. The post-processing SINR of RE $(n,k)$ can be directly estimated using 
\begin{align}
\gamma[n,k] = \frac{|x[n,k]|^2}{|x[n,k] - y[n,k]|^2}, 
\end{align}
and the correspond wideband SINR metrics (average or EESM-based) can be estimated using (\ref{SINR_Mapping_EESM_avg}). But $x[n,k]$ can seldom be accurately estimated in the presence of noise interference. However, it can be perfectly reconstructed if the post-decoder bit sequence is known to be accurate. 

If $\mathbf{b}$ represents the data bits after turbo-decoding, the receiver can reconstruct the transmitted data symbol on each RE by implementing the transmitter baseband processing chain\footnote{Since 3GPP standardization documents are publicly available \cite{3GPPLTE_TS36213_v12}, it is possible for the receiver to implement the transmitter processing chain if the appropriate control information is decoded correctly.}. However, perfect reconstruction is guaranteed only when $\mathbf{b}$ is accurate. In LTE and NR, the integrity of $\mathbf{b}$ is ensured using a cyclic redundancy check (CRC) at the end of each data block, where $\text{CRC}=0\ (1)$ indicates decoding success (failure). Since an $n$-bit CRC has a false positive rate of $2^{-n}$, (where $n=24$ in LTE and NR \cite{dahlman20185g}), we use the CRC as an indicator to accurately reconstruct $x[n,k]$ in our proposed framework. 
\begin{remark}
If $x_\mathtt{nn}[n,k]$ is the nearest neighbor of $y[n,k]$, then $|x[n,k] - y[n,k]|^2 \geq |x_\mathtt{nn}[n,k] - y[n,k]|^2$. For constant envelope modulation schemes, post-equalizer SINR estimated using the nearest neighbor decision rule forms an upper bound to the actual SINR. In other QAM schemes, nearest neighbor association often overestimates the SINR. 
\end{remark}

\begin{figure}[!htbp]
	\centering
	\begin{subfigure}[t]{0.47\textwidth}
		\centering
		\includegraphics[width=2.8in]{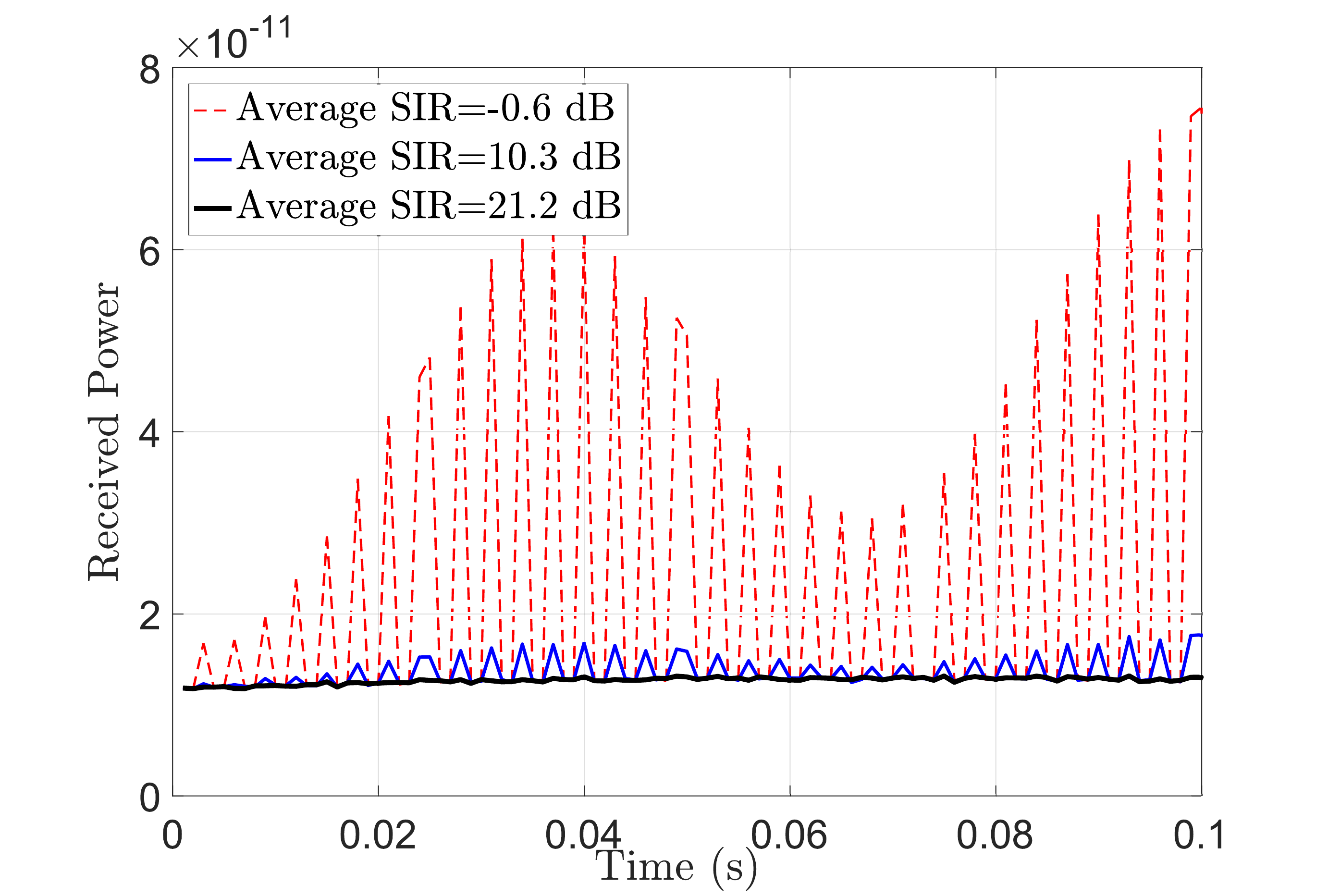}\\
		[-2ex]
		\caption{}
		\label{Fig_Received_Power}
	\end{subfigure}
	~ 
	\begin{subfigure}[t]{0.47\textwidth}
		\centering
		\includegraphics[width=2.8in]{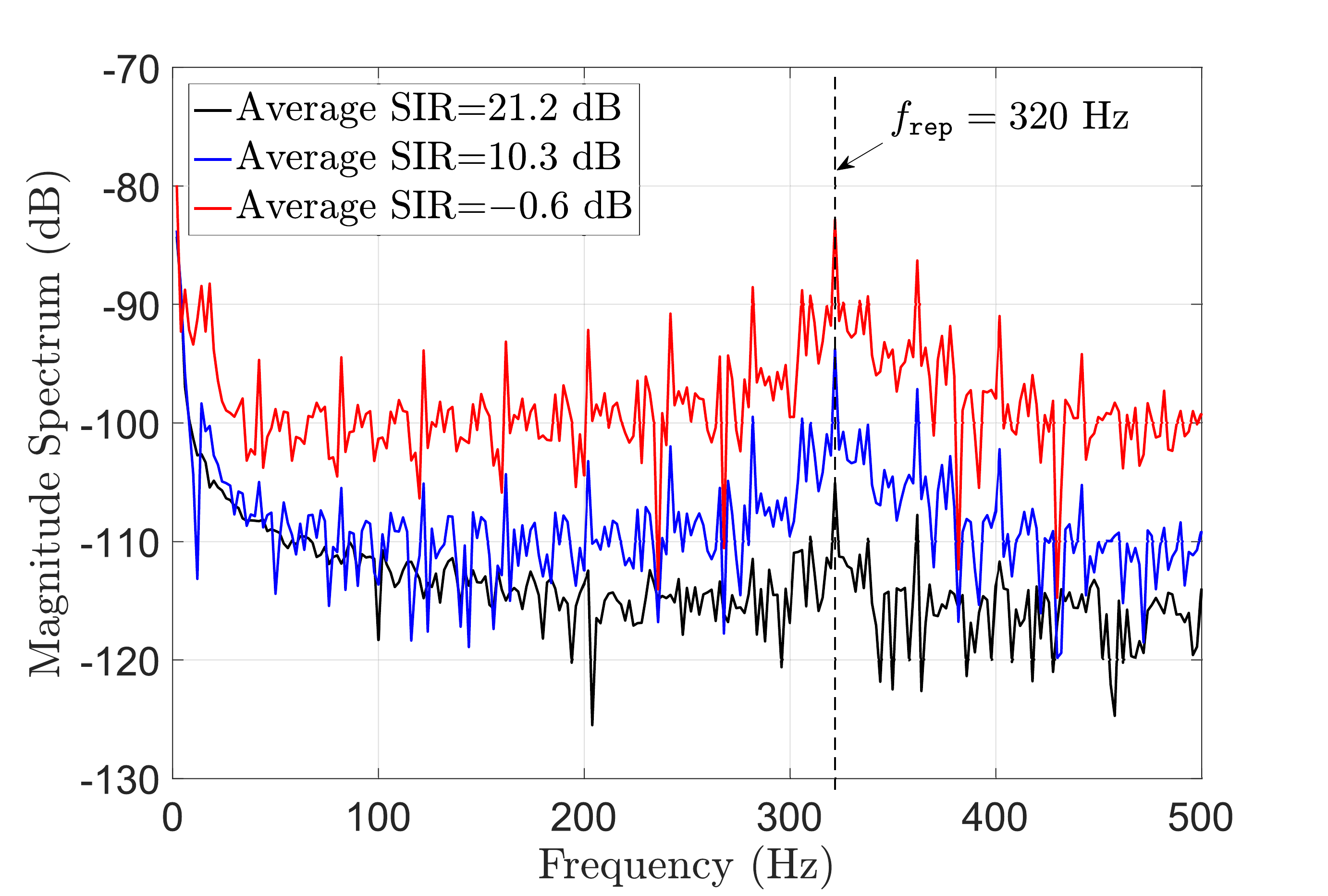}\\
		[-2ex]
		\caption{}
		\label{Fig_FFT_Received_Power}
	\end{subfigure}
	~
	\begin{subfigure}[t]{0.47\textwidth}
		\centering
		\includegraphics[width=2.8in]{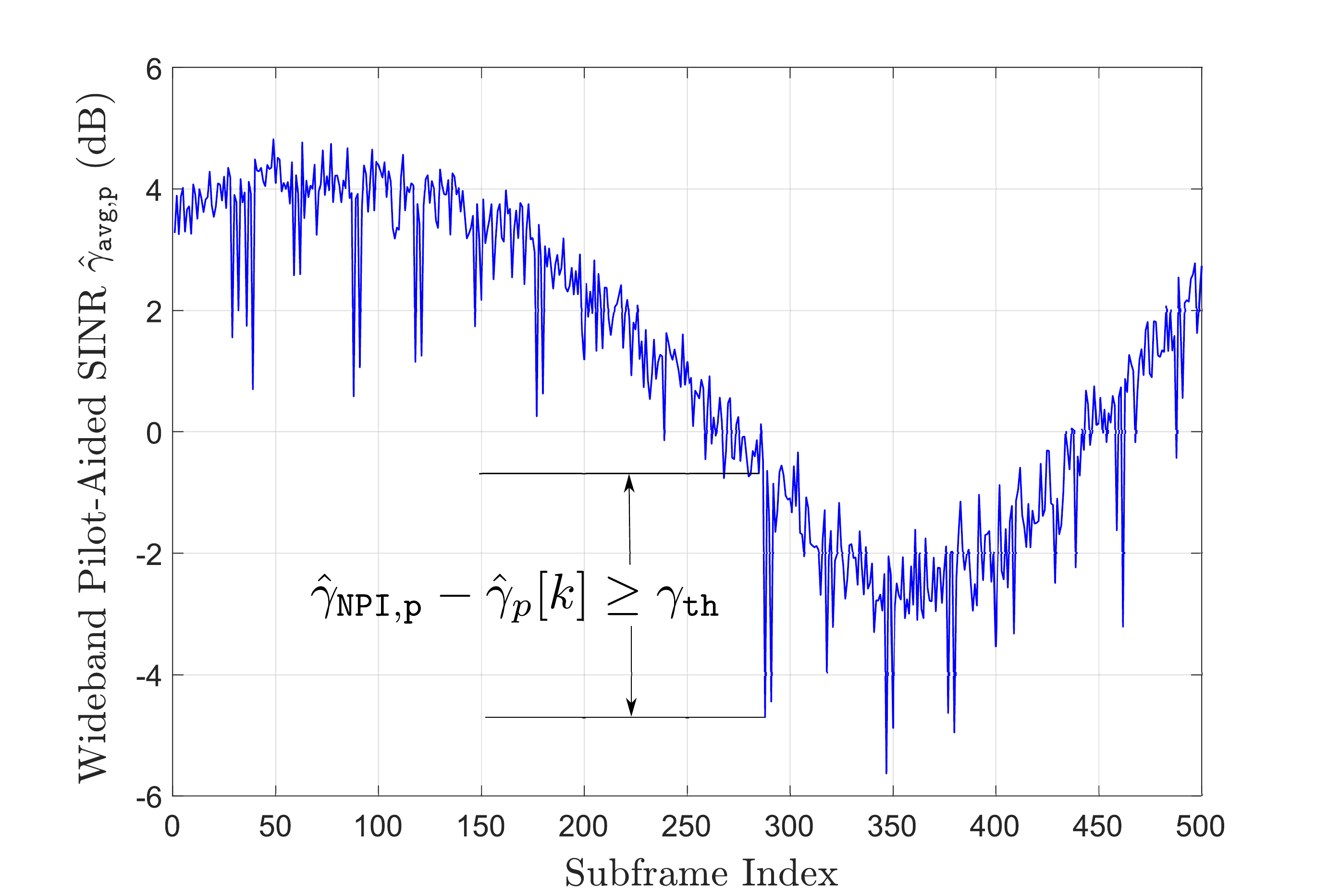}\\
		[-2ex]
		\caption{}
		\label{Fig_Threshold_based_pil_contam_det}
	\end{subfigure}
	~ 
	\begin{subfigure}[t]{0.47\textwidth}
		\centering
		\includegraphics[width=2.8in]{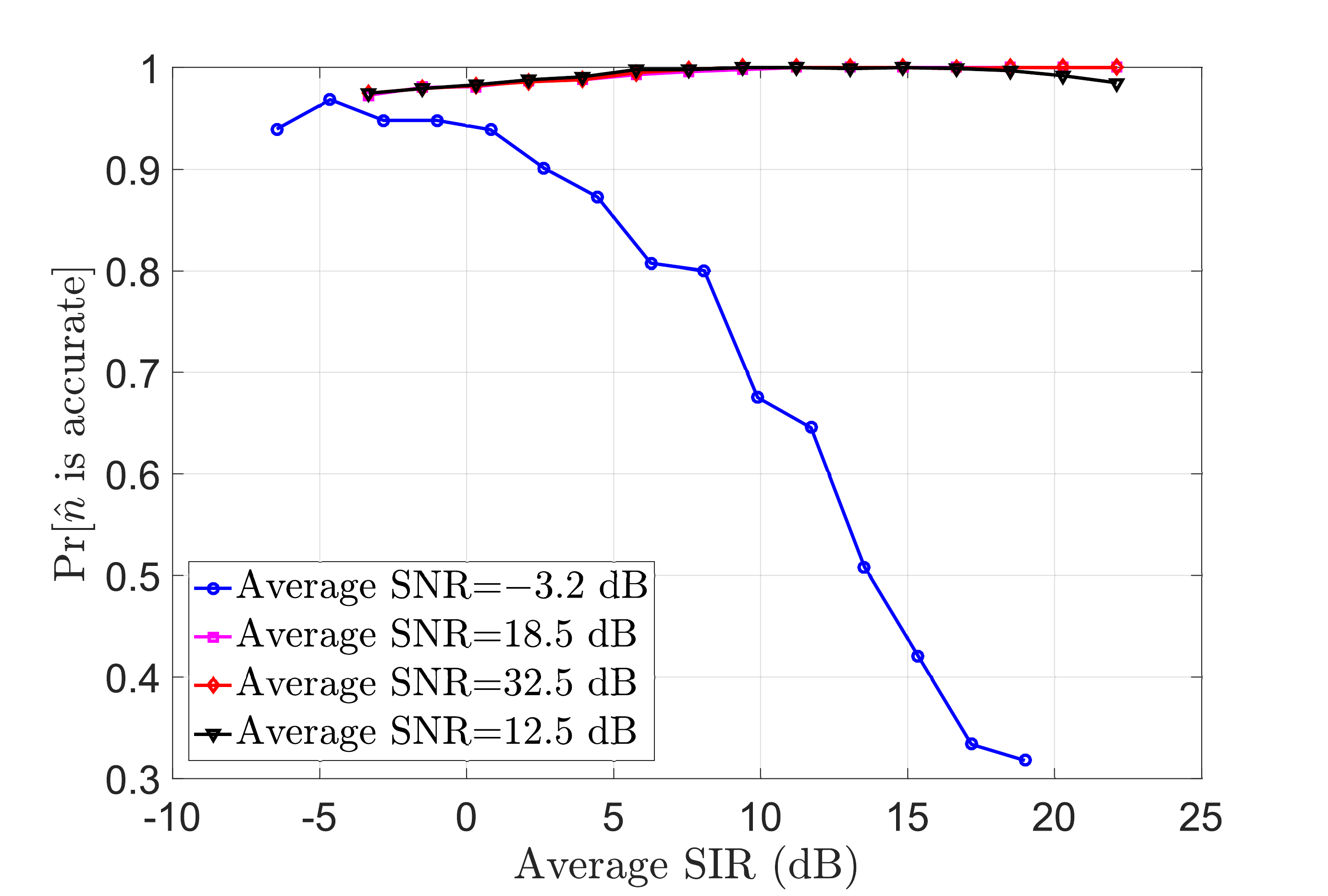}\\
		[-2ex]
		\caption{}
		\label{Fig_Prob_Contam_OFDM_symb_Correct}
	\end{subfigure}
	~ 
	\begin{subfigure}[t]{0.47\textwidth}
		\centering
		\includegraphics[width=2.8in]{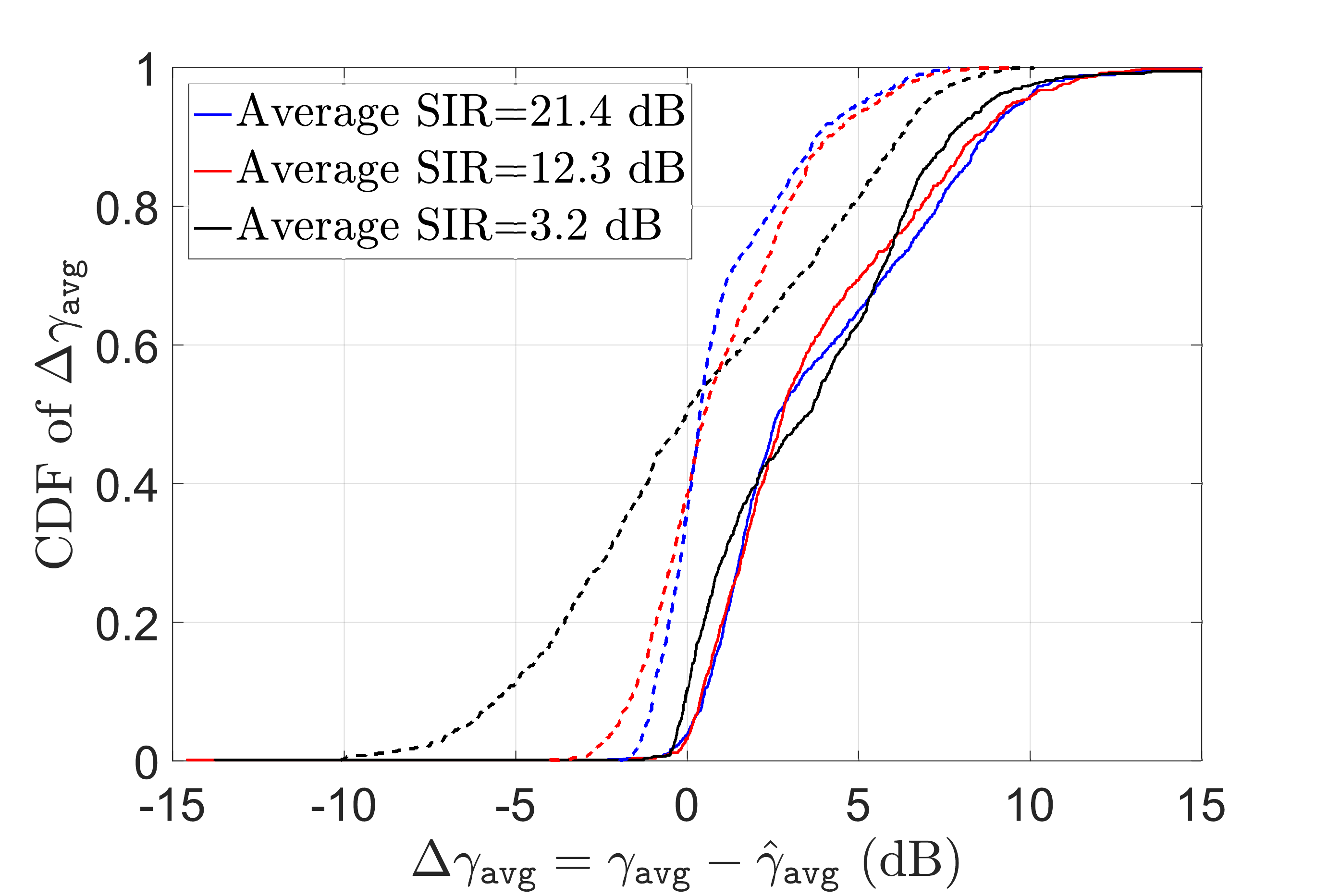}\\
		[-2ex]
		\caption{}
		\label{CDF_Mismatch_SINR_very_low_SNR}
	\end{subfigure}
	~ 
	\begin{subfigure}[t]{0.47\textwidth}
		\centering
		\includegraphics[width=2.8in]{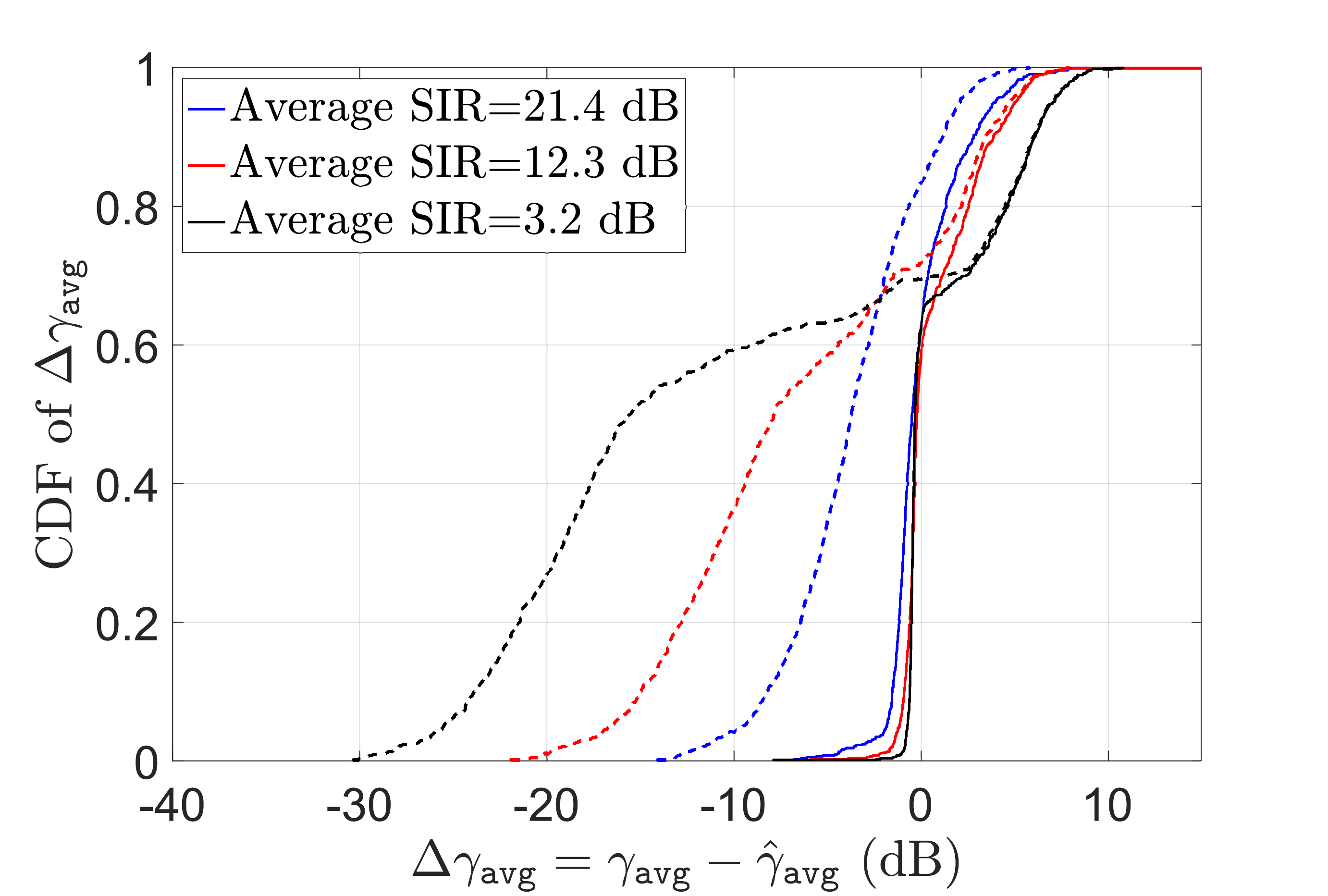}\\
		[-2ex]
		\caption{}
		\label{CDF_Mismatch_SINR_low_SNR}
	\end{subfigure}
	~ 
	\begin{subfigure}[t]{0.47\textwidth}
		\centering
		\includegraphics[width=2.8in]{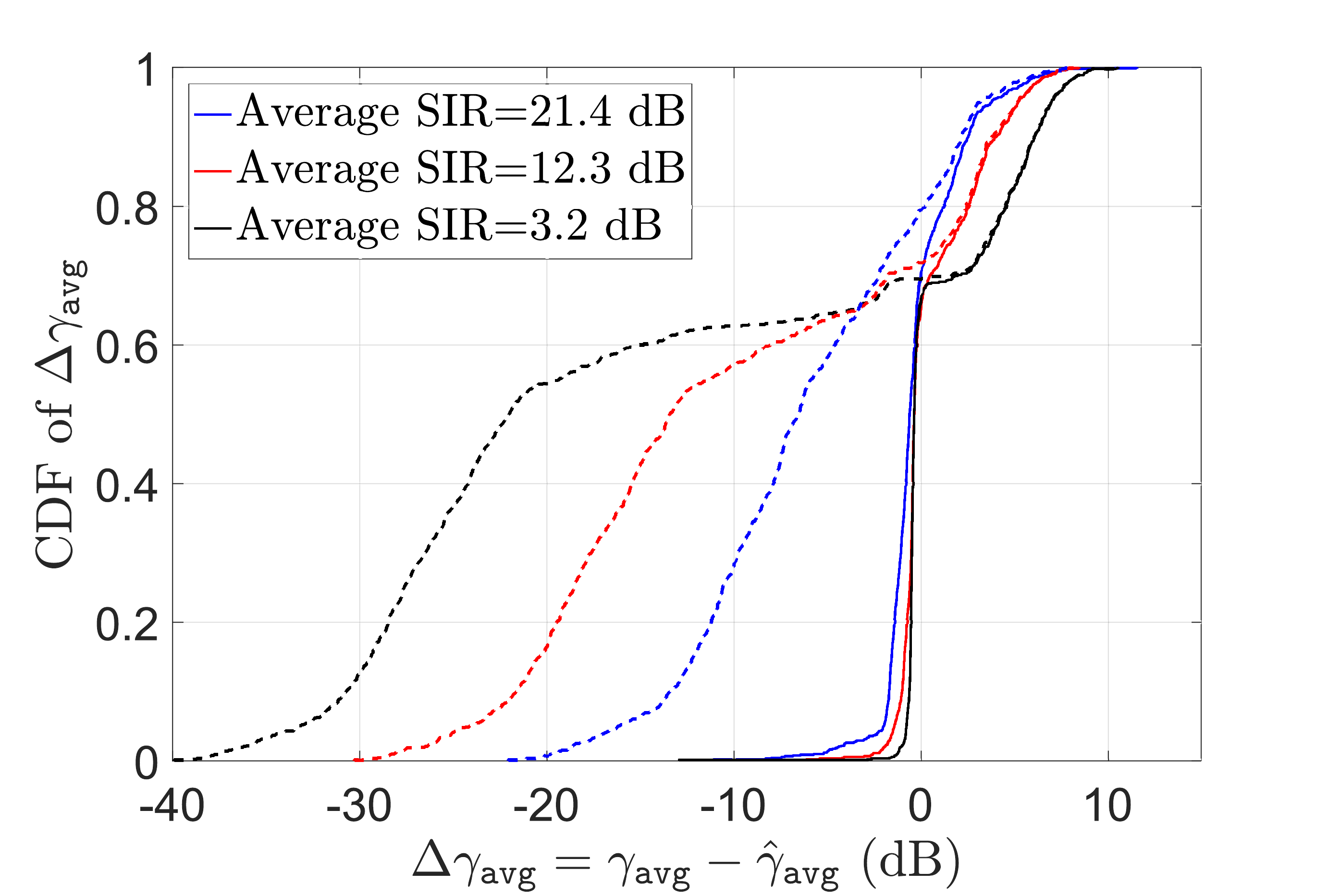}\\
		[-2ex]
		\caption{}
		\label{CDF_Mismatch_SINR_medium_SNR}
	\end{subfigure}\\
	[-2ex]
	\caption{Performance of various stages of the hybrid post-equalizer SINR estimation framework: (a) received power per subframe in the LTE downlink with average $\text{SNR}=19.5$ dB, (b) corresponding amplitude spectrum of the received power per subframe for a window length of $500$ points, (c) illustration of the threshold-based pilot contamination detection with $\gamma_\mathtt{th} = 1$ dB when average $\text{SNR}=-0.2$ dB, (d) probability of accurate \textit{contaminated symbol} detection, and comparison of the average SINR metric mismatch ($\Delta \gamma_\mathtt{avg} \text{ (dB) } = \gamma_\mathtt{avg} \text{ (dB) } - \hat{\gamma}_\mathtt{avg} \text{ (dB)} $) for interference-impaired subframes using the proposed framework (in solid lines) and the pilot-aided method (in dashed lines) using equation (\ref{SINR_pilot}), when (e) average $\text{SNR}=-0.2$ dB, (f) average $\text{SNR}=13.8$ dB, and (g) average $\text{SNR}=19.5$ dB.}
	\label{Fig_Num_Results_SINR_Estim_Framework}
\end{figure}
\subsection{Numerical Results} \label{Num_Results_SINR_Estim_Framework}
In this subsection, we show the performance results of the proposed SINR estimation framework. We consider the example of the LTE-A Pro downlink sharing spectrum with a linear frequency modulated pulsed radar with the transmitted waveform shown in (\ref{Radar_waveforms}), and the other system parameters shown in Table \ref{Table_Radar_LTE_coexist_HybSINR_DCFB}. In addition, the assumptions used to analytically characterize the heuristic performance in section \ref{Low_Complex_Heur_SINR_Estimate} are relaxed in the numerical results presented below.

Fig. \ref{Fig_Received_Power} shows the downlink received power in every subframe. The corresponding windowed FFT computed using a window length of $500$ subframes is shown in Fig. \ref{Fig_FFT_Received_Power}. We observe that the amplitude spectrum can accurately estimate $f_\mathtt{rep}=\tfrac{1}{T_\mathtt{rep}}$ for a wide range of SIR values. 

Fig. \ref{Fig_Threshold_based_pil_contam_det} illustrates the threshold-based pilot contamination detection method described in section \ref{Pilot_Interf_Detect_Threshold}. Using $f_\mathtt{rep}$ and memory of received power per subframe in the recent past, the pilot-aided wideband SINR ($\hat{\gamma}_{\mathtt{NPI},p}$) is calculated for \textit{interference-free subframes}\footnote{If $T_\mathtt{rep}$ is smaller than the subframe duration, then the received power of \textit{each OFDM symbol} needs to be used to estimate $f_\mathtt{rep}$, and detect pilot contamination.}, and compared to pilot-aided SINR of the current subframe. As mentioned earlier, a threshold of $\gamma_\mathtt{th} = 1$ dB is chosen, since variations greater $\pm 2$ dB will result in use of a different MCS \cite{rupp2016vienna}.

Fig. \ref{Fig_Prob_Contam_OFDM_symb_Correct} shows the performance of Algorithm \ref{Algo_detect_contamin_OFDM_symb}, for different values of SINR. At low SNR, we observe that the accuracy of the proposed method improves with increasing INR when the interference power rises above the noise floor. For medium to high SNRs, the probability of accurate detection is greater than $95\%$, indicating reliable detection performance for a wide range of SIR and INR values. 

Fig. \ref{CDF_Mismatch_SINR_very_low_SNR}-\ref{CDF_Mismatch_SINR_medium_SNR} compares SINR estimation performance of the proposed framework ($\hat{\gamma}_\mathtt{avg,hyb}$) with the pilot-aided method ($\hat{\gamma}_\mathtt{avg,p}$) \textit{for interference-impaired subframes}. The distribution of the average SINR mismatch $\Delta \gamma_\mathtt{avg} = (\gamma_\mathtt{avg} - \hat{\gamma}_\mathtt{avg})$, for a wide range of SNR and INR conditions are plotted, where negative $\Delta \gamma_\mathtt{avg}$ indicate overestimated SINR values. We observe that pilot-aided methods have a high density of negative $\Delta \gamma_\mathtt{avg}$, that results in degradation of link adaptation performance. In contrast, the proposed framework improves the SINR estimation performance for a large range of SNR and INR values. In the low SNR-high SIR regime, we observe that the proposed framework underestimates the SINR with a probability higher than $95 \%$. This trend can be attributed to the robustness of the heuristic in QPSK, which is typically used in low SINR conditions. In other SNR and SIR regimes, we observe that the semi-blind wideband SINR estimate (a) lies within $\pm 5$ dB of the true value for more than $80\%$ of the subframes, and (b) is skewed towards conservative SINR ($\Delta \gamma_\mathtt{avg}$) estimates. As we will demonstrate in the next section, \textit{robust SINR estimates} obtained using the proposed framework significantly improves link-level performance in hostile spectrum sharing environments. However, these improvements are dependent on the availability of accurate SINR estimates for both interference-impaired and interference-free subframes. An explicit scheme to ensure the availability of accurate CSI is presented in the next section. 

\begin{figure}[!t]
	\centering
	\includegraphics[width=4in]{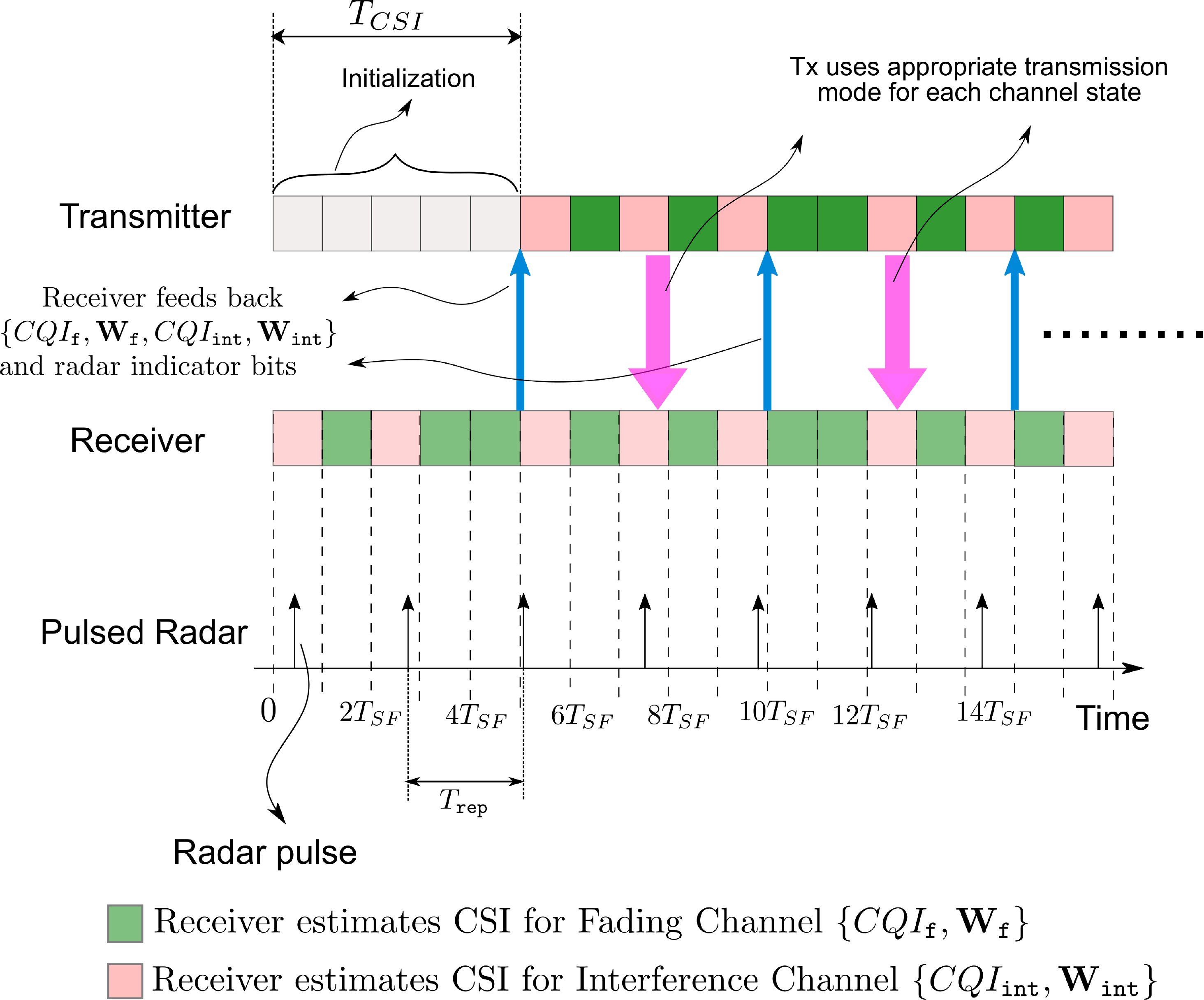}\\
	[-1ex]
	\caption{Illustration of the dual CSI feedback scheme for $T_{CSI} = 5 T_{SF}$, where $T_{SF}$ denotes the duration of each data block. The receiver periodically feeds back the CSI for both channel states $(CQI_\mathtt{f}, \mathbf{W}_\mathtt{f}, CQI_\mathtt{int}, \mathbf{W}_\mathtt{int})$, and the radar indicator bits.}
	\label{Fig_Dual_CSI_Feedback}
\end{figure}

\vspace{-8pt}
\section{Dual CSI Feedback} \label{sec:Dual-CSI-FB}
As discussed in section \ref{Subsec_LinkAdapt_Lim_CSI_FB}, current cellular standards support limited CSI feedback of a single set $CSI = \{ CQI, \mathbf{W} \}$. While this mechanism is efficient in conventional cellular deployments, the presence of pulsed radar interference in a spectrum sharing scenario results in two channel states:
\begin{enumerate}
\item the fading channel, in interference-free data blocks, and
\item the interference-impaired channel, when the pulsed radar is present.
\end{enumerate}
Clearly, a single set of quantized CSI cannot accurately approximate a bimodal channel distribution. In order to handle the additional state in radar-cellular spectrum sharing, we propose \textit{`dual CSI feedback'}, where each user periodically feeds back quantized CSI for both channel states. 
\vspace{-5pt}
\subsection{Feedback Requirements}
In the CSI reporting interval, each user feeds back the set $CSI_\mathtt{dual} = \{ CQI_\mathtt{f}, \mathbf{W}_\mathtt{f}, CQI_\mathtt{int}, \mathbf{W}_\mathtt{int} \}$, where the subscript $\mathtt{f}$ ($\mathtt{int}$) refers to the CSI of the fading (interference-impaired) channel states respectively. 

In addition, the transmitter must know the presence of radar interference in advance, to use the optimal transmission mode for future data blocks. This is enabled by \textit{radar indicator} feedback, which indicates the presence or absence of pulsed radar in each data block, for the \textit{next} $T_{CSI}$ \textit{data blocks}. The receiver can predict the presence of radar interference in a future data block by estimating the $T_\mathtt{rep}$ and monitoring the indices of corrupted OFDM symbols, as discussed in sections \ref{subsec:T_rep_estim} and \ref{subsec:OFDM_index_corrupted}. However, it is worthwhile to note that \textit{radar indicator feedback from a single designated user} is enough for the transmitter to know the indices of future corrupted data blocks. Fig. \ref{Fig_Dual_CSI_Feedback} shows a schematic of the dual CSI feedback scheme, where the \textit{initialization procedure} is used to obtain estimates of $T_\mathtt{rep}$ and $CSI_\mathtt{dual}$ for the first time. 

Assuming a data block duration of $T_{SF} = 1$ ms, if the CSI reporting interval is $T_{CSI}$, then radar indicator feedback consumes $b_\mathtt{rad}$ bits of feedback per CSI reporting interval, where $\lceil \log_2 (T_{CSI}) \rceil \leq b_\mathtt{rad} \leq T_{CSI}$ bits. If the number of active users in the cell is $N_\mathtt{act}$, the total additional feedback overhead is $b_\mathtt{int} = (N_\mathtt{act}N_\mathtt{int} +b_\mathtt{rad})$ bits, where $N_\mathtt{int}$ is the number of additional bits necessary to convey CSI for the interference-impaired fading channel. If $\mathbf{W} \in \mathcal{W}$ and $CQI \in \mathcal{C}$, then $N_\mathtt{int} \geq \big \lceil \log_2 |\mathcal{C}| + \log_2 |\mathcal{W}| \big \rceil$ bits. The corresponding rate overhead is $R_\mathtt{int} = \tfrac{N_\mathtt{act} b_\mathtt{int}}{T_{CSI}}$ bps. 
\vspace{-5pt}

\begin{figure*}[!htbp]
	\centering
	\begin{subfigure}[t]{0.47\textwidth}
		\centering
		\includegraphics[width=3.1in]{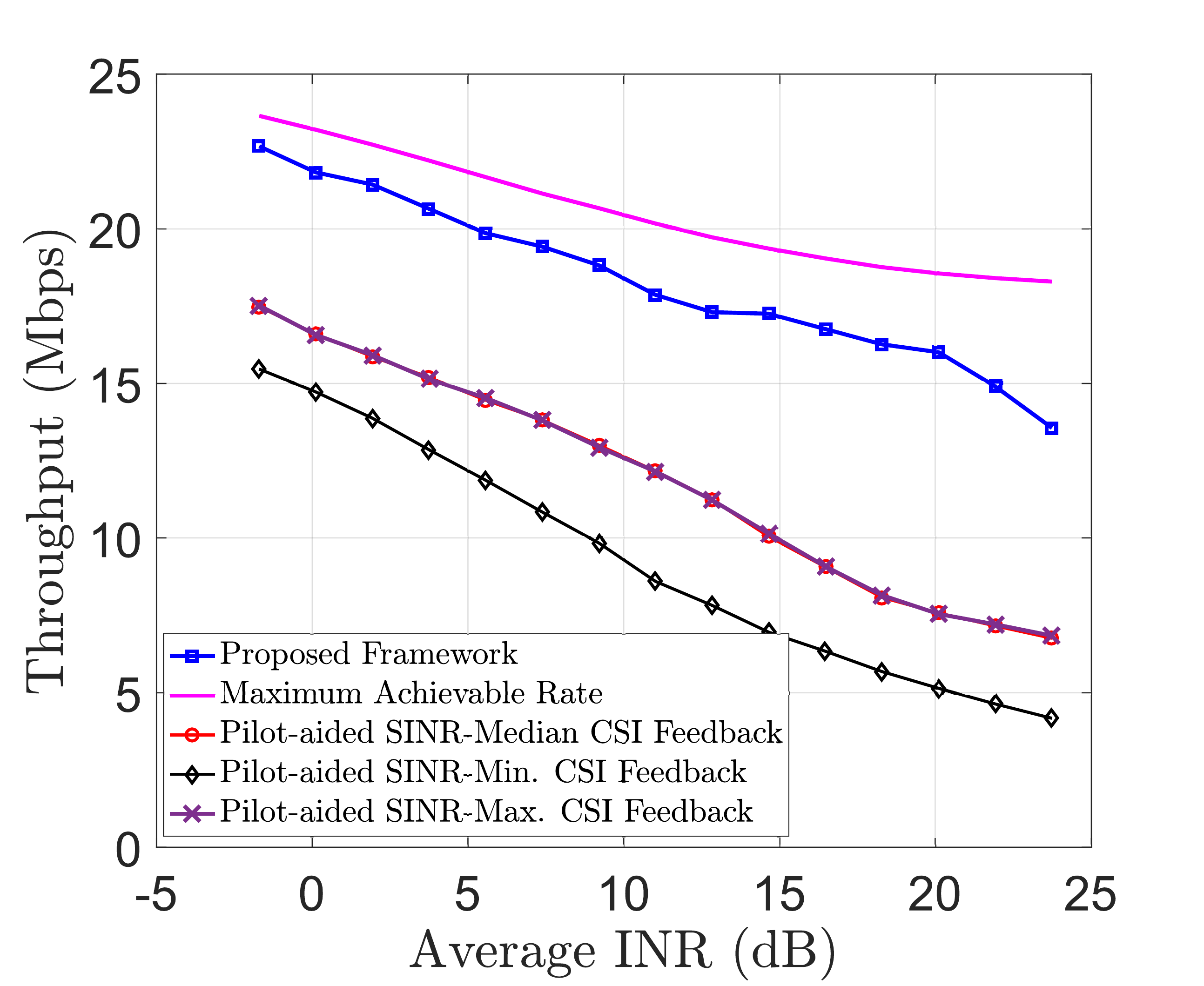}\\
		[-1ex]
		\caption{}
		\label{Fig_TVT_Throughput_Plots}
	\end{subfigure}
	~
	\begin{subfigure}[t]{0.47\textwidth}
		\centering
		\includegraphics[width=3.1in]{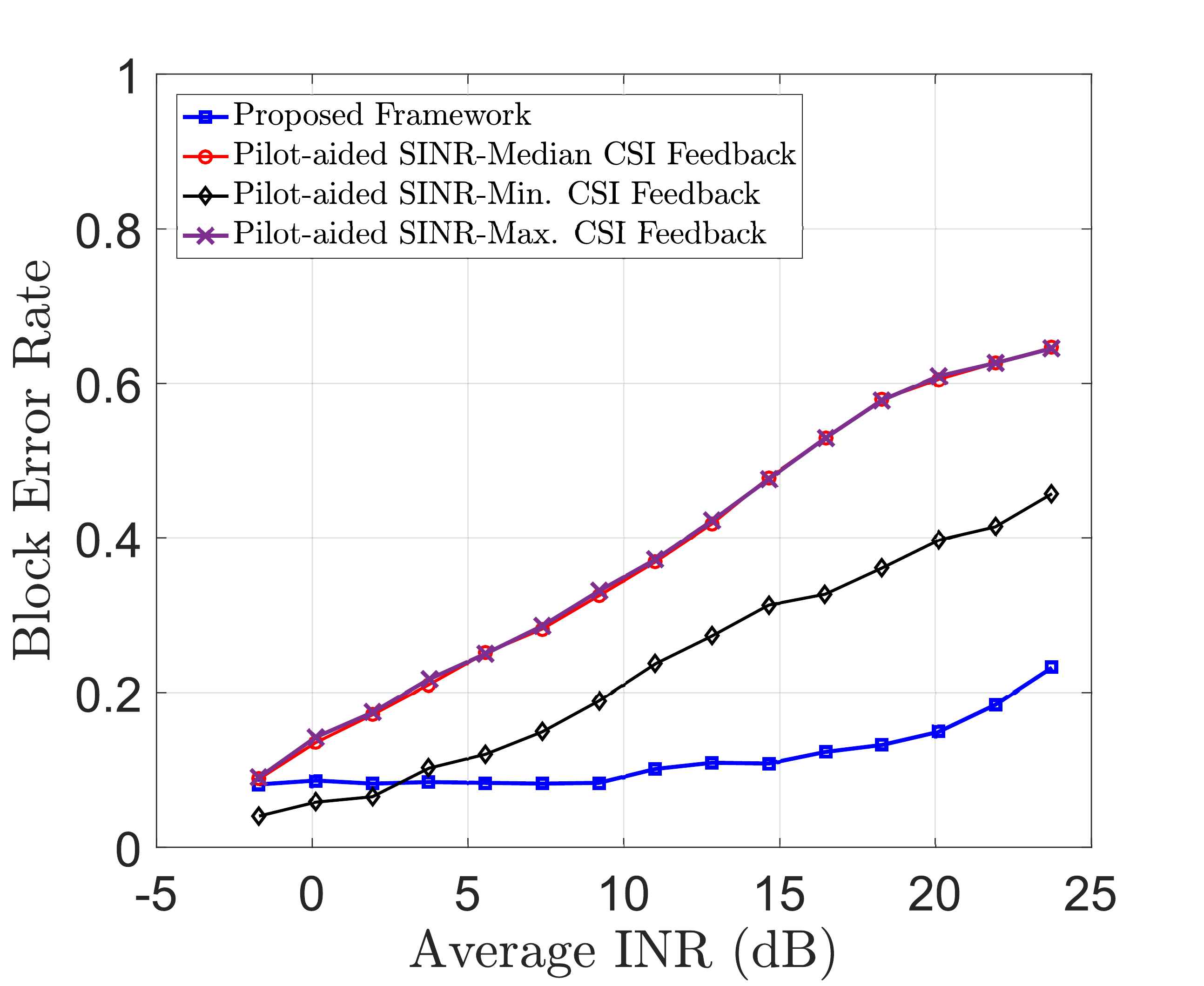}\\
		[-1ex]
		\caption{}
		\label{Fig_TVT_BLER_Plots}
	\end{subfigure}
	~
	\begin{subfigure}[t]{0.47\textwidth}
		\centering
		\includegraphics[width=3.1in]{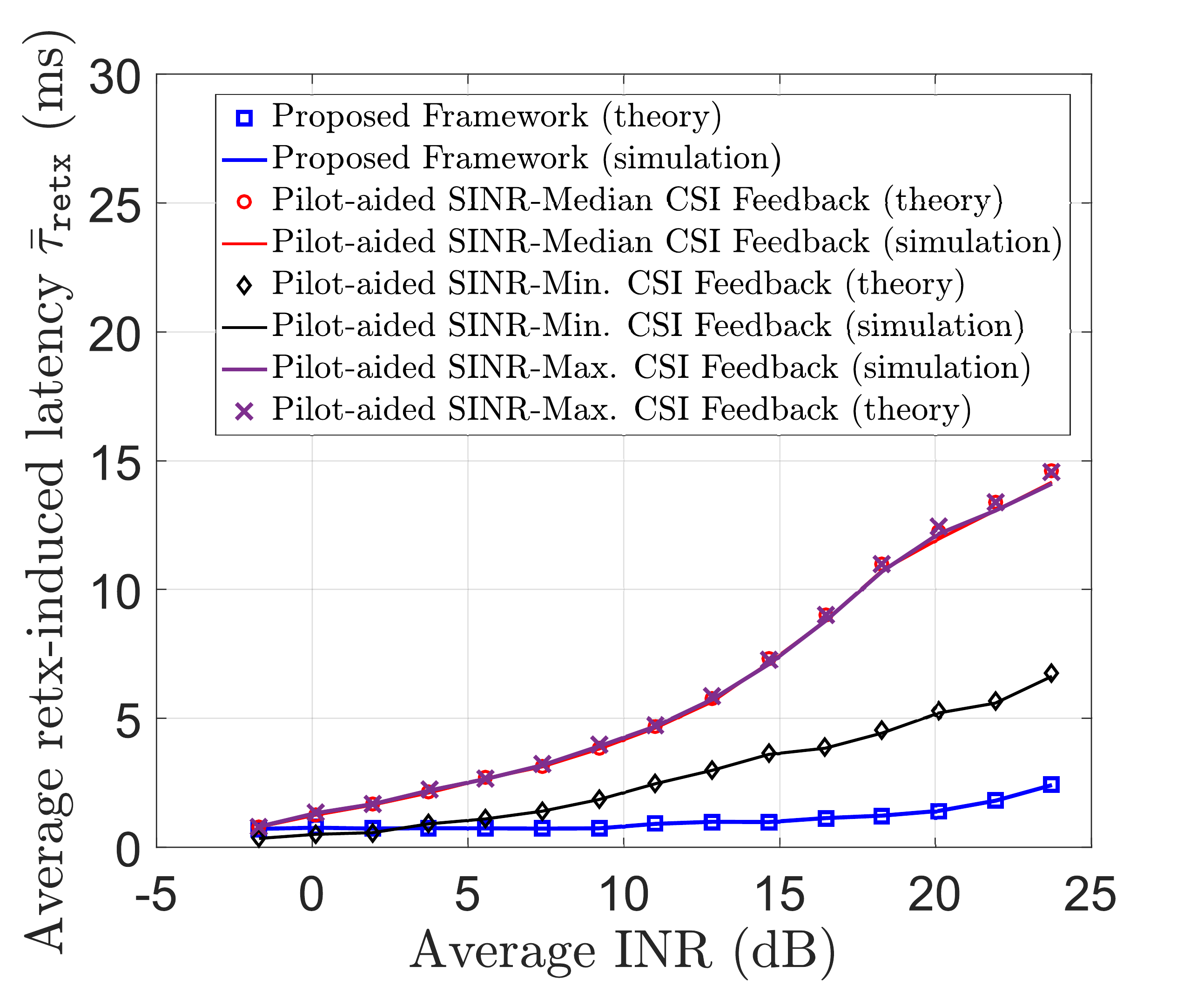}\\
		[-1ex]
		\caption{}
		\label{Fig_TVT_HARQLatency_Plots}
	\end{subfigure}
	\caption{Enhancement of (a) Throughput, (b) Block Error Rate, and (c) Retransmission-induced latency performance, using the proposed hybrid SINR estimation and Dual CSI feedback framework. The average SNR is $19.5$ dB.}
	\label{Fig_LinkAdaptPerf_DualCSIFB}
\end{figure*}

\subsection{Link-Level Performance Improvements}\label{Sec:Num-Results}
In this subsection, we compare the performance of the \textit{hybrid SINR estimation-dual CSI feedback} framework with the \textit{pilot-aided SINR estimation-single CSI feedback} scheme (henceforth referred to as the `conventional' scheme). We developed a 3GPP-compliant link-level simulator to analyze radar-LTE coexistence scenarios, using the MATLAB LTE/NR toolboxes\textsuperscript{TM}, using the system parameters shown in Table \ref{Table_Radar_LTE_coexist_HybSINR_DCFB}. For the conventional scheme, we consider (a) mininum CSI feedback, (b) median CSI feedback, and (c) and maximum CSI feedback schemes that were described in section \ref{Subsec_LinkAdapt_Lim_CSI_FB}.

Fig. \ref{Fig_LinkAdaptPerf_DualCSIFB} compares the link-level performance of the proposed framework with the conventional scheme. Fig. \ref{Fig_TVT_Throughput_Plots} shows the throughput as a function of the average INR when the average $\text{SNR} = 19.5$ dB. We observe that the proposed framework achieves a $30\%-100 \%$ rate enhancement when compared to median and maximum CSI feedback, and a $47\%-225 \%$ rate enhancement compared to minimum CSI feedback. In addition, we also observe that our framework achieves $74\%-96\%$ of the maximum achievable rate over a wide range of INR values, demonstrating a high utilization of the channel capacity. 

It is important to observe that the rate improvement due to the proposed framework balances the BLER constraints as shown in Fig. \ref{Fig_TVT_BLER_Plots}, where $\text{BLER} \leq 0.1$ for $\text{INR} \leq 12$ dB. Interestingly, the BLER performance at high INR improves significantly when compared to minimum CSI feedback, the most conservative conventional scheme. As expected, median and maximum CSI feedback always result in a higher BLER compared to minimum CSI feedback. This is because it requires a higher number of interference-impaired pilots per estimation window to mimic the performance of minimum CSI feedback. 

High BLER due to decoding failures result in degradation of the \textit{HARQ-induced latency}, which is defined as the latency due to hybrid ARQ (HARQ) retransmissions in LTE and NR. The average HARQ-induced latency $(\bar{\tau}_\mathtt{retx})$ is approximately given by \cite{Rao_VTC_LTE_LinkAdapt_2019}
\begin{align}
\label{retx_induced_latency}
\bar{\tau}_\mathtt{retx} = \frac{\mathtt{BLER} \times \bar{\tau}_\mathtt{wait}}{1 - \mathtt{BLER}}.
\end{align}
$\bar{\tau}_\mathtt{wait}$ is the average wait time between consecutive retransmissions. We assume $\bar{\tau}_\mathtt{wait} = 8$ ms, which is the typical value in LTE and NR \cite{sesia2011lte}, \cite{dahlman20185g}. Fig. \ref{Fig_TVT_HARQLatency_Plots} shows that the proposed framework improves retransmission induced latency by a factor of $3$ when compared to minimum CSI feedback, and by an order of magnitude when compared to median CSI feedback. In addition, we observe that theoretical and simulation values are in good agreement.

In LTE and NR, CSI feedback for single-user transmission modes has a overhead of about $b_\mathtt{fb}=10$ bits per CSI estimation interval $T_{CSI}$, where $T_{CSI} \geq 2$ ms \cite{dahlman20185g}, \cite{sesia2011lte}. Therefore, in a cell with $N_\mathtt{act} = 100$ active users, the additional rate overhead due to dual CSI feedback will satisfy $r_\mathtt{int} \leq \tfrac{100 \times (10)}{2 \times 10^{-3}} + \tfrac{1}{1 \times 10^{-3}} = 510$ kbps. 

In summary, the proposed framework simultaneously improves throughput, BLER, and latency performance when compared to conventional schemes in the presence of pulsed radar interference. For most operational regimes, the downlink throughput improvement is significantly high to justify the use of dual CSI feedback. For MU-MIMO transmission modes in NR that typically need $100$ bits/user/CSI estimation interval \cite{dahlman20185g}, further investigation is needed to evaluate the performance achieved using our framework. In general, dual CSI feedback is beneficial if the \textit{cell-wide throughput gain} is greater than the additional uplink rate overhead.%

\vspace{-10pt}

\section{Conclusion}\label{sec:Conclusions}
In this paper, we developed a comprehensive semi-blind SINR estimation framework using pilot-aided and heuristic-aided estimates to compute the wideband post-equalizer SINR in radar-cellular coexistence scenarios. We characterized the distribution of a low complexity max-min heuristic under a tractable signal model, and demonstrated its accuracy and robustness for interference-impaired QAM data symbols. To handle channel bimodality due to periodic transitions between the \textit{fading} and the \textit{interference-impaired} channel states, we proposed a \textit{dual CSI feedback} mechanism where the receiver reports quantized CSI for both channel states. Unifying these two schemes and using radar-LTE-A Pro coexistence as an example, we demonstrated significant improvements in key link-level performance metrics such as throughput, BLER and retransmission-induced latency \textit{simultaneously}. 

In vehicular communication systems such as C-V2X, link adaptation decisions need to be taken at a faster timescale due to the highly dynamic wireless channel. Co-channel or adjacent channel pulsed radar interference inhibits accurate CSI acquisition, which adversely impacts the rate and latency performance of a vehicular link. The semi-blind SINR estimation and dual CSI feedback framework proposed in this paper addresses the issue of accurate CSI acquisition in the presence of such wideband intermittent interference signals. Further, the low computational complexity and low overhead of the proposed framework promises a high potential for being effective in dynamic channel conditions, and hence is attractive for implementation in vehicular communication systems sharing spectrum with a high-powered radar.

Investigation of the optimal SU- and MU-MIMO precoder estimation in non-pilot interference is a useful extension to this work, which is especially important in multi-antenna transmission modes of LTE-A Pro and 5G NR. In addition, novel scheduling and resource management schemes based on this framework can also be developed for different applications such as vehicular-to-everything (V2X) and Internet of Things (IoT) services coexisting with radar. Such scenario-specific frameworks will be of practical importance to enable efficient link adaptation mechanisms in radar-5G/6G coexistence since rate and latency performance often need to be jointly optimized in these scenarios. 
\vspace{-10pt}
\appendix
\section*{Proof of Lemma \ref{Lemma_Cond_Dist_D_min_X}} \label{Appndx_A}
The conditional CDF of $\{D|X,\Phi\}$ can be written as 
\begin{align*}
F_{D}(d|x,\phi) = \int_{ \mathcal{A}_{n_R}} \int_{\mathcal{A}_{n_I}} \mathbb{P}[D \leq d|x, \phi, n] f_{N} (n) d n,
\end{align*}
where $f_N(n) = f_{N_R}(n_R) f_{N_I}(n_I)$. By equation (\ref{Expanding_gen_form_of_Dmin}) we can observe that $D$ is a Rician random variable, since it is the amplitude of a complex Gaussian where the real/imaginary parts have a different mean. Thus, the integral can be transformed into polar coordinates $(z,\theta)$ to get an integral of the form 
\begin{align}
\label{Cond_CDF_defn_D_polar}
F_{D}(d|x,\phi) = \int_{0}^{d} \int_{ \mathcal{A}_{\theta}(z)} f_{Z,\Theta} (z,\theta|x,\phi) d \theta d z, d \geq 0,
\end{align}
where $f_{Z,\Theta} (z,\theta|x,\phi)$ is the conditional density function of $\{ Z, \Theta\}$. Depending on the value of $D$, there are 3 distinct regions of integration for QAM constellations: (a) $0 \leq d \leq \tfrac{d_\mathtt{c}}{2}\ \forall\ x \in \mathcal{X}$, (b) $\tfrac{d_\mathtt{c}}{2} \leq d \leq \tfrac{d_\mathtt{c}}{\sqrt{2}}\ \forall\ x \in \mathcal{X}$, and (c) $\frac{d_\mathtt{c}}{\sqrt{2}} \leq d \leq +\infty$ for $x \in \mathcal{X}_\mathtt{bnd}$. Fig. \ref{Fig3_AllCases} shows these regions for 16-QAM. We denote the distance of each point $x_i \in \mathcal{X}$ along the x- and y-axes to the edges of its decision region is given by $d^{L,R}_{x_i}, d^{U,R}_{x_i}$ and $d^{L,I}_{x_i}, d^{U,I}_{x_i}$ respectively. Table \ref{Table_dist_to_decsn_bdries_16QAM} shows these boundaries for points in the first quadrant of a 16-QAM constellation. Below, we derive the conditional distribution of $\{D|X, \Phi\}$ for each region, by leveraging the properties of Rician r.v's. 

\begin{figure}[!t]
	\centering
	\includegraphics[width=3.5in]{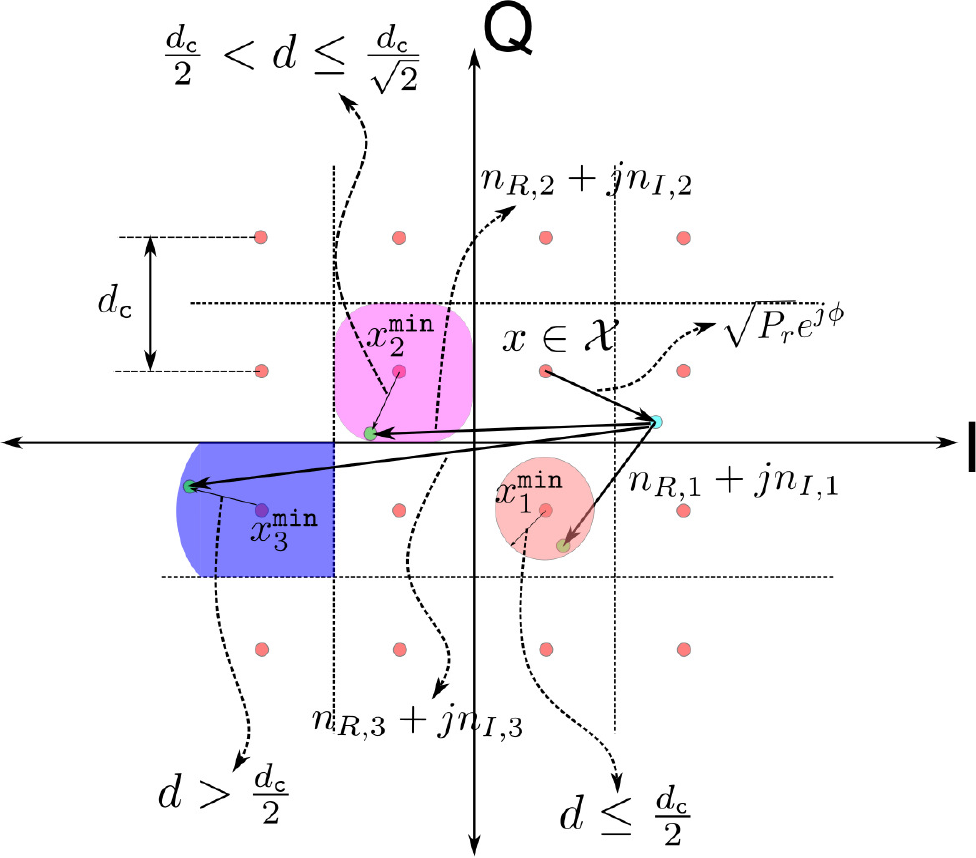}
	\caption{Illustration of the different range of values for $d$, and the corresponding region of integration to derive $F_D(d)$. Cases 1 and 2 are possible for all $x \in \mathcal{X}$, but case 3 is possible only for $x \in \mathcal{X}_\mathtt{bdry}$. }
	\label{Fig3_AllCases}
\end{figure}

\vspace{-7pt}
\subsection{Case 1: $0 \leq d \leq \tfrac{d_\mathtt{c}}{2}\ \forall\ x^{(j)} \in \mathcal{X}$}
In this case, the region of integration is \textit{circular} with radius $d$ as shown in Fig. \ref{Fig3_AllCases} (red shaded region). Defining $m_{R,j} \triangleq [x_{R} - x^{(j)}_{R} + \sqrt{P_r} \cos(\phi) + n_R], m_{I,j} \triangleq [x_I - x^{(j)}_{I} + \sqrt{P_r} \sin(\phi) + n_I]$, and $\theta \triangleq \tan^{-1} \big( \tfrac{m_{I,j}}{m_{R,j}} \big)$ in equation (\ref{Expanding_gen_form_of_Dmin}), and conditioning on $x, x^{(j)} \in \mathcal{X}$ and $\phi \in [0,2\pi]$, we observe that when the nearest neighbor is $x^{(j)} \in \mathcal{X}$, 
$\{D_\mathtt{min}| X, \phi\} \sim \mathtt{Rician} (\nu_j, \sigma^2)$ with parameters $\nu^2_j = m^2_{R,j} + m^2_{I,j}$ and $\sigma^2 = \tfrac{\sigma^2_n}{2}$. Therefore, we have 
\begin{align}
\label{Case1_FullCond}
f_{Z}(z|x,\phi) & = \sum\nolimits_{x^{(j)} \in \mathcal{X}} \tfrac{2z}{\sigma^2_n} e^{-\frac{z^2 + \nu^2_j}{\sigma^2_n}} I_0 \big(\tfrac{2 \nu_j d}{\sigma^2_n} \big), z \geq 0, \text{ and } \nonumber \\
F_{D} (d|x, \phi) & = \sum\nolimits_{x^{(j)} \in \mathcal{X}} \int_{0}^{d} \tfrac{2z}{\sigma^2_n} e^{-\frac{z^2 + \nu^2_j}{\sigma^2_n}} I_0 \big(\tfrac{2 \nu_j d}{\sigma^2_n} \big)d z, \nonumber \\
& \stackrel{(a)}{=} \sum\nolimits_{x^{(j)} \in \mathcal{X}} \Big[ 1 - Q_1 \Big( \tfrac{\sqrt{2} \nu_j}{\sigma_n}, \tfrac{\sqrt{2} d}{\sigma_n} \Big) \Big],
\end{align}
where $I_0 (\cdot)$ is the Bessel function of the first kind with order zero, and (a) is obtained by simplifying the CDF of a Rician random variable in the form of a Marcum Q-function with parameters $(M, a, b)$ \cite{abramowitz2012handbook}. 
 
\begin{figure}[t]
	\centering
	\includegraphics[width=3.5in]{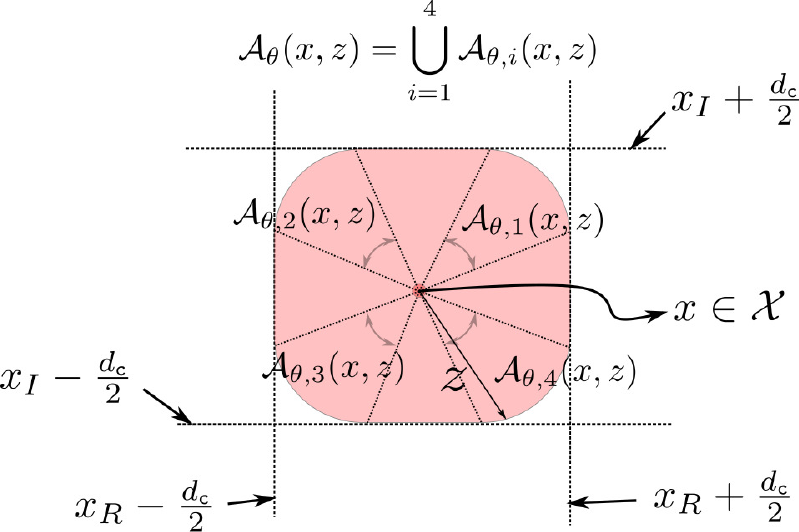}\\
	[-2ex]
	\caption{Illustration of the limits of $\theta$ in equation (\ref{Theta_region_Case2_1}) when $x \in \mathcal{X}_\mathtt{int}$.}
	\label{Fig4_Case2_illustration}
\end{figure}

\subsection{Case 2: $\tfrac{d_\mathtt{c}}{2} \leq d \leq \tfrac{d_\mathtt{c}}{\sqrt{2}}\ \forall\ x^{(j)} \in \mathcal{X}$}
In this case, the integration region for each point $x_j$ is a \textit{`truncated' circle}, as shown in Fig. \ref{Fig3_AllCases} (pink colored region). The minimum distance $D$ is a Rician random variable with a \textit{radially asymmetric integration region}. Therefore, the conditional density in polar coordinates is given by
\begin{align}
\label{Rician_z_theta_joint_PDF_1}
f_{Z, \Theta}(z, \theta|x, \phi) = \sum\nolimits_{x^{(j)} \in \mathcal{X}} \tfrac{z}{\pi \sigma^2_n} e^{-\frac{z^2 + \nu^2_j + 2z m_j}{\sigma^2_n}}, 
\end{align}
for $z \geq 0, 0 \leq \theta \leq 2 \pi$. The region of integration of $\Theta$ for $x \in \mathcal{X}$ is a function of $z$, given by
\begin{align*}
\mathcal{A}_{\theta}(x, z) = & \big\{ \theta \big| d^{L,R}_{x} \leq z \cos \theta \leq d^{U,R}_{x}, d^{L,I}_{x} \leq z \sin \theta \leq d^{U,I}_{x}  \big\}.
\end{align*}
For $x \in \mathcal{X}_\mathtt{int}$, the above can be simplified as 
\begin{align}
\label{Theta_region_Case2_1}
\mathcal{A}_{\theta}(x, z) =& \bigcup_{i=1}^4 \mathcal{A}_{\theta,i} (x, z), \text{ where } \nonumber \\
\mathcal{A}_{\theta,i} (x, z) =& \Big \{\theta \big| (i-1)\tfrac{\pi}{2} + \cos^{-1} \big( \tfrac{d_\mathtt{c}}{2z} \big) \leq \theta \leq (i-1)\tfrac{\pi}{2} + \sin^{-1} \big( \tfrac{d_\mathtt{c}}{2z} \big) \Big \}.
\end{align}
Fig. \ref{Fig4_Case2_illustration} shows an example of the integration region for $x^{(j)} \in \mathcal{X}_\mathtt{int}$. Using (\ref{Rician_z_theta_joint_PDF_1})-(\ref{Theta_region_Case2_1}) and marginalizing $\Theta$ and $X$, we obtain the desired result. 

\subsection{Cases 3 and 4: $d \geq \tfrac{d_\mathtt{c}}{\sqrt{2}}$ for all $x^{(j)} \in \mathcal{X}$}
Derivation of the conditional CDF is similar to that in Case 2. The additional constraint here is that $F_{D |X , \Phi} (d |x , \phi)$ is non-zero iff $x^{(j)} \in \mathcal{X}_\mathtt{bnd}$. This is because for interior points, $0 \leq D_\mathtt{min} \leq \tfrac{d_\mathtt{c}}{\sqrt{2}}$ is always true for QAM modulation schemes, as illustrated in Fig. \ref{Fig3_AllCases}.
\bibliographystyle{IEEEtran}
\bibliography{TVT_2019_references}
\end{document}